\newcommand{\old}[1]{}
\newtheorem{theorem}{Theorem}[section]
\newtheorem{lemma}[theorem]{Lemma}
\newtheorem{corollary}[theorem]{Corollary}
\theoremstyle{definition}
\newtheorem{definition}[theorem]{Definition}
\newtheorem{remark}[theorem]{Remark}
\newcommand{\R}{\mathbb{R}}
\title[A Partisan Districting Protocol with Provably Nonpartisan Outcomes]{A Partisan Districting Protocol\\
with Provably Nonpartisan Outcomes}
\author{Wesley Pegden}
\address{Department of Mathematical Sciences, Carnegie Mellon University}
\email{wes@math.cmu.edu}
\author{Ariel D. Procaccia}
\address{Computer Science Department, Carnegie Mellon University}
\email{arielpro@cs.cmu.edu}
\author{Dingli Yu}
\address{Institute for Interdisciplinary Information Sciences, Tsinghua University}
\email{yudingli\_good@126.com}
\begin{document}

\begin{abstract}
We design and analyze a protocol for dividing a state into districts, where parties take turns proposing a division, and freezing a district from the other party's proposed division.   We show that our protocol has predictable and provable guarantees for both the number of districts in which each party has a majority of supporters, and the extent to which either party has the power to pack a specific population into a single district.  
\end{abstract}

\maketitle

\section{Introduction}
In the American system of democracy, local representatives from a state are elected to the national House of Representatives in direct local elections, held in districts of (roughly) equal population. These Congressional districts of a state are redrawn every 10 years, following the decennial census.

Because the Congressional representatives of a state are elected in local district elections, there is no guarantee that the political makeup of the state's elected slate of representatives will mirror the political composition of statewide vote casts in the election.  In practice, this seemingly desirable target is missed by a wide mark; for example, in Pennsylvania in 2012, 51\% of the population voted for Democrat representatives, yet Republican representatives won 13 out of 18 district elections.

This phenomenon, popularly known as \emph{gerrymandering} --- where political parties carefully draw districts to maximize their advantage in election outcomes --- has been studied on several fronts.  Researchers have worked to propose simple tests to quantify gerrymandering with greater granularity than the election outcomes provide \cite{wang,mcdonald}, and to distinguish gerrymandering from natural artifacts of political geography \cite{chenrodden,duke,outliers}.  

In some states, there has been an earnest effort to reduce political influence over the districting process through the establishment of independent redistricting commissions.  For example, in Arizona, redistricting is currently carried out by a commission composed of 10 Democrats, 10 Republicans  and 5 Independents.

Our goal in this paper is to propose and analyze a protocol for fair redistricting that can be carried out without ``independent'' agents; in particular, a redistricting commission using our protocol can be composed of an even number of members, drawn from the two major political parties of the state.  Districting with our protocol would enable reasonable districts to be drawn for a state without requiring effective mechanisms identifying truly independent agents.

Motivated by classical cake-cutting problems with ``I-cut-you-choose'' types of solutions~\cite{BT96,Pro13}, our algorithm leverages competition between two political entities to create a reasonable districting in a turn-based protocol.  We will prove that our protocol has desirable properties in idealized settings.

\section{Setting and results}
We consider the districting problem as a competition between two parties.  A \emph{state} will be modeled as a continuously divisible object, some subset of which is loyal to Player 1, and some subset of which is loyal to Player 2.  We will sometimes ignore geometry and sometimes pay attention to it, so we really have two models of a state.

In the first model, the state is an interval $[0,n]$ ($n$ is the number of districts) and an $n$-\emph{districting} is a a collection of $n$ disjoint unit-measure subsets of the interval.  In this model, the measure of a subset corresponds to a population, where 1 unit of population is the size of 1 district.  In the second model, the state is a subset $X\subset \R^2$ topologically equivalent to a disc, together with a population density $\phi:X\to [0,1]$.  We are given that $\int_{X} \phi=n$ and an $n$-districting of $X$ is a division of $X$ into $n$ disjoint \emph{connected} subsets $X_i$, each satisfying that $\int_{X_i} \phi=1$.

Given a district in a districting of a state, the district is \emph{loyal} to the player with a majority of loyalty in the district.  (In cases of exact ties, we break them arbitrarily, say, to Player 1).   The outcome for a player from the districting is the player's \emph{slate}, which is the number of districts loyal to him.

We now turn to a discussion of natural districting protocols, culminating with our own.

\subsection{One player decides}
In this trivial protocol, some external process chooses one player, and that player freely chooses the districting.  Subject to moderate legal oversight, this is essentially the protocol currently used in most states.  In practice, the external process which chooses the favored player is often control of the state legislature (which itself is influenced by gerrymandering of state-level districts), or, as in Pennsylvania, control of the state supreme court.

\subsection{Independent agent protocols}
If benevolent independent agents are available in the problem model, then a natural solution is to simply allow independent agents to draw the districting.  In spirit, this is the approach of redistricting commissions such as Arizona's.  Our goal is to eliminate the trust required of independent agents.

Interesting work by  Landau, Reid, and Yershov \cite{LRY} and Landau and Su \cite{LandauSu} developed protocols with a moderate dependence on an independent agent.  Essentially, an independent agent is used just to choose a suitable division of the state into two parts, and assign one part (and a target number of districts) to each player.  Each player then freely districts his part, and the result is combined into a districting of the state.  They proved an elegant theoretical guarantee for their protocol under optimum play: Each player will achieve at least the average of their best possible slate and their worst possible slate of representatives, among districtings respecting the split-line chosen by the independent agent.

Apart from the dependence on an independent agent, their protocol has one other feature which may dissuade its adoption in practice.  Since the outcome of the protocol is a districting in which each player freely chose districtings on their side of the split, the intended result is a districting of a state in which one side is gerrymandered for Player 1, while the other is gerrymandered for Player 2.  Although this produces a reasonable outcome in terms of the slates won by each player, which is our primary outcome of interest, such a protocol will not reduce (and in some cases, may even exacerbate) other maligned effects sometimes attributed to gerrymandering, such as entrenchment of incumbent representatives and the rise of political extremism  \cite{lindgren2013effect,carson2014reevaluating}. Similar problems plague the I-cut-I-freeze protocol, below.  We will give a formalization capturing this phenomenon (Definition \ref{d.target}) and prove that our protocol avoids it (Theorem \ref{t.target}).

\subsection{I-cut-I-freeze}
A very simple multiround districting protocol is to simply have the players take turns adding districts with the correct population size to an initially empty districting, until the districting is complete.  The following version of this ensures that completion is always possible.
\subsubsection*{I-cut-I-freeze:}   Initialize $n$ to be the number of districts the state is to be divided into.  Initially no districts are frozen.  On each player's turn (while $n>0$), the player:
\begin{enumerate}
\item Redistricts the unfrozen part of the state into $n$ districts;
\item Chooses one of the new districts to be frozen;
\item Updates $n\leftarrow (n-1)$.
\end{enumerate}

One attractive theoretical feature of this protocol is its  slate guarantee (ignoring any constraints due to geography); the straightforward proof is omitted.

\begin{theorem}
In the geometry-free model, if more than $\ell/ n$ of the state is loyal to Player $i$, then Player $i$ can achieve a slate of size $\ell$ from a $n$-districting, in the I-cut-I-freeze protocol.
\end{theorem}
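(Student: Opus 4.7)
The plan is to prove the theorem by strong induction on $n$, with inductive hypothesis ``for every $n' < n$, every nonnegative real $L$, every integer $\ell' \ge 0$, and either choice of player to move, if $L > \ell'$ then Player $i$ can force a slate of at least $\ell'$ in the residual I-cut-I-freeze game starting from a state with $n'$ unfrozen districts and $L$ units of $i$-loyalty.''  The base case $n = 0$ is vacuous, since $L_i = 0$ forces $\ell \le 0$.  The driving observation is that whenever $L_i > \ell$ and $\ell \ge 1$, the interval $(1/2,\, L_i - \ell + 1)$ is nonempty; so on Player $i$'s own turn, Player $i$ can freeze an $i$-majority district whose $i$-loyalty $\beta$ lies in this interval.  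This leaves the state $(n-1,\, L_i - \beta)$ with Player $j$ to move and $L_i - \beta > \ell - 1$, so the inductive hypothesis at $n-1$ supplies $\ell - 1$ further $i$-districts, for a total of $\ell$.

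For Player $j$'s turn I split on the $i$-loyalty $y$ of the district that $j$ freezes.  If $y > 1/2$, Player $j$'s frozen district is itself $i$-majority (a gift to Player $i$), and the resulting subgame has $L_i - y > L_i - 1 > \ell - 1$, so a one-step induction on $n-1$ with target $\ell - 1$ concludes.  If $y < 1/2$ and $L_i - y > \ell$, the loyalty remaining already exceeds the full target $\ell$, so direct induction on $n-1$ with target $\ell$ finishes the argument.

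The one real obstacle is the case $y < 1/2$ combined with $L_i - y \le \ell$, which can arise only when the starting margin $L_i - \ell$ is strictly less than $1/2$; here Player $j$'s single sabotaging move wipes the margin out, and one-step induction is short by one district.  The remedy is to look two moves ahead: because $y < 1/2$, we still have $L_i - y > \ell - 1/2$, so on Player $i$'s immediate next turn Player $i$ can freeze an $i$-majority district with $i$-loyalty $\beta \in (1/2,\, L_i - y - \ell + 1)$, an interval that is again nonempty.  This brings the game to a state $(n-2,\, L')$ with Player $j$ to move and $L' > \ell - 1$, and the inductive hypothesis at $n-2$ delivers the remaining $\ell - 1$ districts.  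The key technical point across all cases is that $\beta$ is chosen adaptively as a function of the current margin $L_i - \ell$, so that every strict inequality in the argument is preserved; ties ($y = 1/2$ exactly) fall into the appropriate subcase via the declared tie-breaking rule and cause no difficulty.
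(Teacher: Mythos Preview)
The paper omits the proof of this theorem (``the straightforward proof is omitted''), so there is nothing to compare your argument against. Your inductive scheme is reasonable and handles the main cases correctly, but one step in the two-move remedy does not go through as stated.

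In your problematic sub-case (Player $j$ freezes a district with $y<1/2$ and $L_i-y\le\ell$), you claim Player $i$ can next freeze an $i$-majority district with $\beta\in(1/2,\,L'-\ell+1)$, where $L'=L_i-y$. The interval is indeed nonempty, but you never check that such a $\beta$ is \emph{realizable}: with $n-1$ unit-measure districts remaining and total $i$-loyalty $L'$, the loyalty $\beta$ of any single district must satisfy $\beta\ge L'-(n-2)$, since the leftover $i$-loyalty $L'-\beta$ has to fit in the other $n-2$ unit districts. When $\ell=n-1$ this lower bound equals $L'-\ell+1$, the open upper endpoint of your target interval, and no admissible $\beta$ exists. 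A concrete instance is $n=3$, $\ell=2$, $L_i=2.2$, with Player $j$ to move: Player $j$ can freeze a district with $y=0.3$, leaving $L'=1.9$ spread across two unit districts, so every district Player $i$ can form has $\beta\ge 0.9$, yet your argument demands $\beta<0.9$. (The same obstruction appears already at $n=2$, $\ell=1$, where only one district remains and $\beta=L'$ is forced.)

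The patch is easy and fits your framework: precisely when this achievability constraint binds, Player $j$'s remaining loyalty is $(n-1)-L'<1/2$, so every district in the unfrozen region is automatically $i$-majority however it is drawn, and Player $i$ collects all $n-1\ge\ell$ remaining districts outright. Adding this as a preliminary case before invoking the two-step look-ahead closes the gap.
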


On the other hand, like the independent agent protocols discussed above, the I-cut-I-freeze protocol allows each player to freely draw many districts.  In particular, each political party will be able to pack minority populations, reinforce their incumbents, etc.  Although from the standpoint of the final slates for each player, competition balances the two sides, these features may make this protocol undesirable nonetheless.

To rigorously capture the power afforded each player by the I-cut-I-freeze protocol and the related independent agent protocols, we define the following property of a districting protocol:
\begin{definition}
\label{d.target}
We say that a $n$-districting protocol has the $B$-target property if for any $i\in \{1,2\}$ and any target subset of the state of measure/cardinality $\frac{1}{n}$ of the state's total, Player $i$ has a strategy to ensure that at most a $1/B$ fraction of the target intersects any single district.
\end{definition}
Note that trivially any protocol has the $1$-target property.  Moreover, $1$ is the largest value of $B$ for which the I-cut-I-freeze protocol has the $B$-target property, since, for example, on Player 1's first turn, he can always create a district equal to any choice for the target.  We will see that the situation for the our proposed protocol is very different.

\subsection{I-cut-you-freeze}
The I-cut-you-freeze protocol is the main subject of this paper.  Essentially, the motivation is to reduce the influence a single player can exert unilaterally on the drawing of any single district.  Although each player will still draw $n/2$ districts (up to rounding) that are in the final districting, they will no longer have control over which of the districts they draw in the course of the protocol end up in the final districting.

\subsubsection*{I-cut-you-freeze:}   Initialize $n$ to be the number of districts the state is to be divided into.  At the beginning of the protocol, Player 1 gives a districting of the state into $n$ districts, and passes it to Player 2. On each player's subsequent turn (while $n>0$), the player who has just been given the districting:
\begin{enumerate}
\item Chooses an unfrozen district to be frozen (in the districting received from the other player);
\item Updates $n\leftarrow (n-1)$.
\item Redistricts the still unfrozen part of the state into $n$ districts, and passes the new districting back to the other player.
\end{enumerate}
At the end of a protocol, we have a districting in which half of the districts were drawn by Player 1 and frozen by Player 2, and vice versa.

Let us define $\sigma(n,s)$ to be the slate which will be won by Player $1$ under optimum play of the I-cut-you-freeze protocol for $n$-districtings, in the setting where the state is modeled as an interval of length $n$, when the subset of the state loyal to Player $i$ has measure $s$.  We characterize the outcome of our protocol asymptotically as follows:

  \begin{theorem}
\label{t.a-seats}  We have
  \[
  \lim_{n\to \infty} \frac{\sigma(n,\alpha n)}{n}=
  \begin{cases}
    2\alpha^2 & \text{for } \alpha\leq \tfrac 1 2\\
    1-2(1-\alpha)^2 & \text{for }\alpha>\tfrac 1 2.
  \end{cases}
  \]
  \end{theorem}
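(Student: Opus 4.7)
By the symmetry of the protocol---swapping the labels of Player 1 and Player 2 corresponds to $\alpha\leftrightarrow 1-\alpha$ while keeping slate totals at $n$, so $\sigma(n,\alpha n)+\sigma(n,(1-\alpha) n)=n$---the formula for $\alpha>\tfrac12$ follows from the $\alpha\le\tfrac12$ case. I will therefore focus on proving $\sigma(n,\alpha n)/n\to 2\alpha^2$ for $\alpha\le\tfrac12$, writing $f(\alpha):=2\alpha^2$ for brevity.

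I would exhibit matching lower and upper bounds via explicit near-equilibrium strategies, propagated by induction on the number $m$ of unfrozen districts. For the lower bound, Player 1's strategy is: whenever asked to draw a districting of $m$ districts on an unfrozen piece of Player-1 fraction $\alpha'$, create $\lfloor 4\alpha'^2 m\rfloor$ Player-1-majority districts, each with Player-1 fraction $\min\{1/(4\alpha'),1\}$, and fill the remaining districts by packing Player-2 supporters. Under the inductive hypothesis that the game value on smaller subproblems is $f(\alpha'')m'+o(m')$, these fractions make Player 2 indifferent between his two qualitative freeze options: freezing a Player-1-majority district concedes a seat to Player 1 but leaves the unfrozen Player-1 fraction essentially unchanged, while freezing a Player-2-packed district wins Player 2 a seat but raises that fraction. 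Writing $\alpha_A$ and $\alpha_B$ for the post-freeze fractions in the two cases, the indifference condition $f(\alpha_A)(m-1)=1+f(\alpha_B)(m-1)$ together with $f'(\alpha)=4\alpha$ forces the gap $p-q=1/(4\alpha')$ between the Player-1 fractions $p$ in Player-1-majority districts and $q$ in Player-2-packed districts; then maximizing Player 1's guaranteed payoff over feasible $q\ge 0$ pins $q=0$ and $p=1/(4\alpha')$. Player 2's matching upper-bound strategy is the symmetric mirror: when Player 2 draws, create $\lfloor 4\alpha'(1-\alpha')m\rfloor$ Player-2-majority districts with Player-2 fraction $1/(4\alpha')$ and $\lfloor(1-2\alpha')^2 m\rfloor$ Player-1-packed districts, which leaves Player 1 indifferent on his freeze.

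The induction then carries the asymptotic $\sigma(m,\alpha'm)=f(\alpha')m+o(m)$ round by round: a single round decreases $m$ by $1$ and perturbs $\alpha'$ by only $O(1/m)$, $f$ is $C^1$ on $[0,\tfrac12]$, and both strategies are calibrated so that the expected Player-1 mass burned per round equals $\alpha'$, preventing systematic drift of $\alpha'$. The hard part will be making the error control rigorous---showing that the per-round $O(1)$ slippages aggregate to only $o(n)$ across $n$ rounds rather than to $\Theta(n)$; this should follow from the drift cancellation combined with smoothness of $f$, but requires a careful accounting. A secondary subtlety appears in the regime $\alpha'<\tfrac14$, where $1/(4\alpha')>1$ renders the prescribed fraction infeasible: there Player 1 falls back to fully-packed Player-1 districts ($p=1$), Player 2 strictly prefers (since $f'(\alpha')<1$) to freeze a Player-2-packed district, and the same $f(\alpha')m$ asymptotic emerges without requiring indifference. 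Integer rounding contributes only $O(1)$ per round and is absorbed.
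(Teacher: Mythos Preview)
Your approach is genuinely different from the paper's. The paper does not work asymptotically at all: it first pins down the \emph{exact} optimal strategies (the stronger player divides evenly; the weaker player, with mass $s_A$, creates $\lfloor 2s_A\rfloor$ districts at loyalty $s_A/\lfloor 2s_A\rfloor\approx\tfrac12$ and the rest at $0$; the freezer then has a strict preference), solves the resulting recursion in closed form to obtain the double-factorial thresholds of Theorem~\ref{t.seats}, and only then applies Stirling to extract Theorem~\ref{t.a-seats}. Note in particular that the true equilibrium does \emph{not} feature the indifference you engineer: for $\alpha'<\tfrac12$ the freezer strictly prefers one option, and the weak cutter's districts sit at loyalty $\approx\tfrac12$, not $1/(4\alpha')$. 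Your indifference-based strategies are admissible as one-sided witnesses for upper/lower bounds, but they are not the optimum.

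There are two real gaps in your proposal. First, the symmetry reduction $\sigma(n,\alpha n)+\sigma(n,(1-\alpha)n)=n$ is false: Player~1 always cuts first, so swapping labels yields $n-\sigma(n,\alpha n)=f(n,(1-\alpha)n,2)$, not $f(n,(1-\alpha)n,1)=\sigma(n,(1-\alpha)n)$; these differ already for $n=2$. The identity you want is only an asymptotic consequence of the theorem itself, so it cannot be used to halve the work. Second, and more seriously, your induction is not well-posed. The hypothesis ``$\sigma(m,\alpha'm)=f(\alpha')m+o(m)$'' is a statement about a limit in $m$; you cannot invoke it at the single value $m-1$ to advance to $m$. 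A rigorous version would need a quantitative envelope $|\sigma(m,\alpha'm)-f(\alpha')m|\le E(m)$ uniform in $\alpha'$, and the one-round recursion you describe only gives $E(m)\le E(m-1)+O(1)$, hence $E(m)=O(m)$, which is vacuous. Your ``drift cancellation'' remark is exactly the missing lemma, and it is where all the content lies; the paper sidesteps this entirely by getting the exact answer first.
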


  \noindent We can also explicitly characterize the small-$n$ behavior in all cases.

  \begin{theorem}
    We have $\sigma(n,s)\geq k$ if and only if
    \begin{equation}
      s\succeq
      \begin{cases}
\displaystyle    \frac{(n-1)!!(2k+[2\nmid n]-2)!!}{2(2k+[2\nmid n]-3)!!(n-2)!!} & \text{for }k\leq n/2\\[15pt]
\displaystyle    n-\frac{n!!(2(n-k)-[2\nmid n]+1)!!}{2(2(n-k)-[2\nmid n])!!(n-1)!!} & \text{for }k>n/2.
      \end{cases}
    \end{equation}
    Here $\succeq$ is interpreted as $\geq$ if ties are broken in favor of Player 1, and $>$ if they are broken in favor of Player 2;  $[2\nmid n]$ is $1$ or $0$ according to the parity of $n$; and $n!!$ is the double factorial $n(n-2)(n-4)\cdots(2\text{ or }1)$.
    \label{t.seats}
  \end{theorem}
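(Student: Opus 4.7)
The plan is to proceed by strong induction on $n$, with the base case $n=1$ being immediate: the lone district is loyal to whoever has the majority, so $\sigma(1,s)\geq 1$ iff $s\succeq 1/2$, matching the formula. For the inductive step it will be useful to have a companion threshold for the mirror protocol in which Player~2 makes the opening cut. Call this loyalty threshold $\tau(n,k)$; by exchanging the roles of the two players (with the tie-breaking convention reversed), $\tau(n,k) = n - \phi(n,n-k+1)$, where $\phi$ is the threshold claimed by the theorem. Thus a closed form for $\tau(n-1,\cdot)$ is available from the inductive hypothesis.

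Unrolling one round of the recursion, $\sigma(n,s)\geq k$ iff Player~1 can choose a cut $(s_1,\ldots,s_n)$ with $\sum_j s_j = s$ and $s_j\in [0,1]$ such that, whichever district Player~2 freezes, the remaining subgame (now an instance of the mirror protocol on $n-1$ districts) still admits slate $\geq k$. This translates to requiring $s - s_j \geq \tau(n-1,k-1)$ for every district with $s_j\geq 1/2$, and $s-s_j\geq\tau(n-1,k)$ for every district with $s_j<1/2$.

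I would parameterize Player~1's cuts by the count $m$ of districts with $s_j\geq 1/2$ (the Player-1-loyal ones). For each $m$, the feasibility of such a cut reduces to a small system of linear inequalities, yielding a threshold $\sigma_m$; the theorem's value is then $\phi(n,k)=\min_m\sigma_m$. A case analysis reveals two regimes. When $k>n/2$ the minimum is attained at $m=n$, where every district is Player-1-loyal with $s_j\in[1/2,\min(1,s-\tau(n-1,k-1))]$ summing to $s$; pushing this to equality pins down $\phi(n,k) = n\tau(n-1,k-1)/(n-1)$. When $k\leq n/2$ the minimum is attained by a mixed cut in which the $n-m$ Player-2-loyal districts are all forced to $s_j=0$, producing $\phi(n,k) = \tau(n-1,k)$.

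Substituting $\tau(n-1,k) = (n-1) - \phi(n-1,n-k)$ along with the inductive formula for $\phi(n-1,\cdot)$ into these two regime expressions and simplifying gives exactly the two branches of the double-factorial formula; the indicator $[2\nmid n]$ arises from the parity of $n-1$ via the identities $(2m)!!=2^m m!$ and $(2m-1)!!=(2m)!/(2^m m!)$. The main obstacle is the identification of the optimal $m$ in each regime and the verification that the feasibility boundaries (a district capped at $1$, forced to $0$, or sitting exactly on the $1/2$ threshold when tie-breaking intrudes) occur precisely where the formula predicts; once the optimal $m$ is pinned down, the algebraic verification is routine but demands careful bookkeeping of parity and of how the tie-breaking convention propagates between $\phi$ and $\tau$.
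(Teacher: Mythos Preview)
Your recursions are correct and match the paper's: $\phi(n,k)=\tau(n-1,k)$ when Player~1 is the weaker cutter and $\phi(n,k)=\tfrac{n}{n-1}\tau(n-1,k-1)$ when stronger. But the step you yourself flag as ``the main obstacle''---identifying the optimal $m$---is precisely where the work lies, and your proposal does not actually resolve it. The paper handles this not by minimizing over $m$ but via structural lemmas about optimal play. The stronger-cutter case (all districts equal, $m=n$) follows easily from monotonicity (Lemma~\ref{lemma:stronger-strategy}). The weaker-cutter case---that $\lambda=0$ is optimal, equivalently that the freezer's best response is to take a fully-opponent-loyal district rather than an $\omega$-district---is the hard part (Lemma~\ref{lemma:weaker-strategy}), and the paper proves it by an \emph{exchange argument}: it compares the freezer's position after taking an $\omega$-district in round $k$ and a pure district in round $k-2$ against the reverse order, with a separate ``close case'' (Lemma~\ref{lemma:close-case}) covering the regime $(k-1)/2\le s_1<k/2$ as the base of that inner induction.

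In your threshold framework the analogous obligation is to check, at $s=\tau(n-1,k)$, that some $m$ (e.g.\ $m=\lfloor 2s\rfloor$) yields $\omega=s/m\in[1/2,1]$ while still satisfying $s-\omega\ge\tau(n-1,k-1)$---roughly, that consecutive $\tau$-thresholds differ by at least $1/2$. This can in principle be extracted from the inductive closed form, but it is the substance of the proof, not bookkeeping: the floor in $m$ interacts with the tie-break at $1/2$ in exactly the way that forces the paper to isolate Lemma~\ref{lemma:close-case}. Note also that you case on $k$ versus $n/2$, whereas the paper cases on $s$ versus $n/2$ (who is ``stronger''); these coincide only a posteriori (Corollary~\ref{c.half}), so within the induction you must carry the distinction in terms of $s$, as the paper does.
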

  By contrast, for the trivial One-player-decides algorithm (with Player 1 deciding), we would have $\sigma(n,s)\geq k$ if and only if $s\geq k/2$; Figure~\ref{f.curve} illustrates a comparison.
\begin{figure}[t]
\begin{center}
\begin{tikzpicture}[xscale=3,yscale=3]

      \draw[-|] (0,0) -- (1,0) node[right] {$\alpha$};
      \draw (1,0) node[below] {\tiny $1$};
      \draw (0,0) node[below] {\tiny $0$};

\draw[-|] (0,0) -- (0,1) node[above] {$\sigma_{10}$};
      \draw (0,1) node[left] {\tiny $10$};
      \draw (0,0) node[left] {\tiny$0$};

\draw (0,0) -- (.05,0) -- (.05,.1) -- (.1,.1) -- (.1,.2) -- (.15,.2) -- (.15,.3) -- (.2,.3) -- (.2,.4) -- (.25,.4) -- (.25, .5) -- (.3,.5) -- (.3,.6) -- (.35,.6) -- (.35,.7) -- (.4,.7) -- (.4,.8) -- (.45,.8) -- (.45, .9) -- (.5,.9) -- (.5,1) -- (1,1);

\draw [thick](0,0) -- (.123047,0) -- (.123047,.1) -- (.245094,.1) -- (.245094,.2) -- (.328125,.2) -- (.328125,.3) -- (.39375,.3)  -- (.39375,.4) -- (.45,.4) -- (.45,.5) -- (.5,.5) -- (.5,.6) -- (.555556,.6) -- (.555556,.7) -- (.619048,.7) -- (.619048,.8) -- (.695238,.8) -- (.695238,.9) -- (.796825,.9) -- (.796825,1) -- (1,1);
\fill (0.5,0.5) ellipse(.0175cm and .0175cm);
    \end{tikzpicture}
    \hspace{1cm}
    \begin{tikzpicture}[xscale=3,yscale=3]

\draw[-|] (0,0) -- (1,0) node[right] {$\alpha$};
      \draw (1,0) node[below] {\tiny$1$};
      \draw (0,0) node[below] {\tiny$0$};

\draw[-|] (0,0) -- (0,1) node[above] {$\sigma_n$};
      \draw (0,1) node[left] {\tiny$n$};
      \draw (0,0) node[left] {\tiny$0$};

\draw (0,0) -- (.5,1) -- (1,1);

\draw[thick,scale=1,domain=0:.5,smooth,variable=\x,black] plot (\x,{2*\x * \x});
\draw[thick,scale=1,domain=.5:1,smooth,variable=\x,black] plot (\x,{1-(2*(1-\x) *(1-\x))});
\fill (0.5,0.5) ellipse(.0175cm and .0175cm);
\end{tikzpicture}
  \end{center}
\caption{\label{f.curve} The proportion of seats $\sigma_n$ won by Player 1 under optimum play is plotted against his loyalty $\alpha$ of the state.  Bold lines are the result of the I-cut-you-freeze algorithm, while light lines are the result of the trivial ``One player decides'' algorithm (with Player 1 deciding).  On the left is the case $n=10$, while the right gives the curves in the large-$n$ limit.  The point $\alpha=\tfrac 1 2$, $\sigma=\tfrac n 2$ is marked on both curves.  By Corollary \ref{c.half}, this point lies on the $\sigma_n(\alpha)$ curve for our protocol for every value of $n$.}
\end{figure}
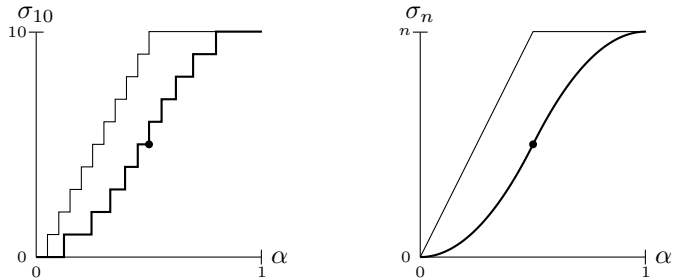

The following corollary is immediate from Theorem \ref{t.seats}. It implies, in particular, that majority shares are perfectly aligned, up to rounding.

  \begin{corollary}\label{c.half}
Player 1 wins more than $\lfloor n/2 \rfloor$ seats if and only if his loyal subset has measure $s\succeq n/2$.
  \end{corollary}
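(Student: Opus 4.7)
The plan is to simply evaluate the threshold in Theorem \ref{t.seats} at $k = \lfloor n/2 \rfloor + 1$, since saying Player 1 wins more than $\lfloor n/2 \rfloor$ seats is the same as $\sigma(n,s) \geq \lfloor n/2 \rfloor + 1$. In both parity cases this value of $k$ satisfies $k > n/2$, so only the second branch of the formula applies, and the work reduces to checking that the resulting double-factorial expression collapses to $n/2$.

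First I would split on parity. For $n = 2m$ even, we have $k = m+1$ and $[2\nmid n] = 0$, so $2(n-k) = 2m-2$ and $2(n-k)+1 = 2m-1$. The threshold becomes
\[
n - \frac{(2m)!!\,(2m-1)!!}{2\,(2m-2)!!\,(2m-1)!!} = n - \frac{(2m)!!}{2(2m-2)!!} = n - \frac{2m}{2} = n - m = \tfrac{n}{2}.
\]
For $n = 2m+1$ odd, $k = m+1$ and $[2\nmid n] = 1$, so $2(n-k) - 1 + 1 = 2m$ and $2(n-k) = 2m$. The threshold becomes
\[
n - \frac{(2m+1)!!\,(2m)!!}{2\,(2m-1)!!\,(2m)!!} = n - \frac{(2m+1)!!}{2(2m-1)!!} = n - \frac{2m+1}{2} = \tfrac{n}{2}.
\]
In each case the $(2m)!!$ and $(2m-1)!!$ factors cancel, and only a two-term telescoping of double factorials remains. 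So in both parities the threshold for gaining more than $\lfloor n/2\rfloor$ seats is exactly $n/2$, and the tie-breaking convention carried by $\succeq$ transfers verbatim from Theorem \ref{t.seats}.

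There is no real obstacle here beyond bookkeeping: the main thing to get right is that $k = \lfloor n/2 \rfloor + 1$ always falls into the second branch of the piecewise formula (which is immediate since $\lfloor n/2\rfloor + 1 > n/2$ for every $n$), and that the $\succeq$ convention is preserved under the substitution, so no separate handling of the tie case is needed.
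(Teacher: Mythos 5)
Your proof is correct and matches the paper's route: the paper simply declares the corollary immediate from Theorem \ref{t.seats}, and your substitution $k=\lfloor n/2\rfloor+1$ into the second branch, with the double-factorial cancellation in both parities yielding the threshold $n/2$, is exactly the omitted verification (your displayed formulas are right, though in the odd case the denominator factors are $(2(n-k)-[2\nmid n])!!=(2m-1)!!$ and $(n-1)!!=(2m)!!$, not "$2(n-k)=2m$" as the prose suggests).
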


As mentioned above, we also wish to capture the diminished power either player has over any single district under our method compared with the previously discussed independent agent protocols, and the I-cut-I-freeze protocol.  It is possible to prove this rigorously even in a geometric setting where districts are required to connected, as we do in Section \ref{s.geometric}.

\begin{theorem}\label{t.target}
The I-cut-you-freeze protocol for a $n$-districting has the $B$-target property for $B=\frac{\sqrt{n}}{2}$, in both the geometric and non-geometric models.
\end{theorem}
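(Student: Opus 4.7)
I would exhibit an explicit strategy for Player 1 (the argument for Player 2 is entirely analogous, up to a minor modification for the initial round). Fix a target $T$ with $\phi(T)=1/n$. \emph{Player 1's strategy:} whenever Player 1 is to cut the unfrozen region $U$ into $m$ new districts, he produces a districting in which each district contains $T$-mass exactly $\phi(T\cap U)/m$; whenever Player 1 is to freeze one of Player 2's districts, he freezes one of minimum $T$-mass. The existence of an equal-$T$-mass cut is trivial in the interval model. In the geometric model it is produced by recursive simultaneous bisection of $\phi$ and $\phi|_T$ by cut curves preserving connectivity -- a ham-sandwich-type construction, which is the main technical difficulty in the geometric case (and where Section~\ref{s.geometric} does the real work).

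Next I would reduce the problem to a one-parameter recurrence. Write $u_k$ for the $T$-mass of the unfrozen region just before Player 1's $k$-th cut, and $m_k := n-2(k-1)$ for the number of districts in that cut. Under Player 1's strategy, each district produced at that step has $T$-mass exactly $u_k/m_k$; in particular, Player 2's subsequent freeze contributes $u_k/m_k$ to the final districting. Next Player 2 redraws the remainder (of total $T$-mass $u_k(m_k-1)/m_k$) into $m_k-1$ districts; Player 1's minimum-freeze then contributes at most the average, which is again $u_k/m_k$. Consequently every frozen district contains $T$-mass at most $\max_k u_k/m_k$, and the problem reduces to bounding this quantity.

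For the recurrence, Player 1's freeze removes between $0$ and $u_k/m_k$ of $T$, the lower bound being attained when Player 2 concentrates all his unfrozen $T$ in a single district (forcing Player 1's freeze to contribute zero). Thus $u_{k+1}\le u_k(m_k-1)/m_k$, and iterating,
\[
u_k\ \le\ \frac{1}{n}\prod_{j=1}^{k-1}\frac{n-2j+1}{n-2j+2}.
\]
The elementary inequality $(2i-1)(2i+1)\le (2i)^2$ gives $\bigl(\tfrac{2i-1}{2i}\bigr)^2 \le \tfrac{2i-1}{2i+1}$, and substituting and telescoping yields
\[
\prod_{j=1}^{k-1}\frac{n-2j+1}{n-2j+2}\ \le\ \sqrt{\frac{m_k+1}{n+1}}.
\]
Therefore $u_k/m_k \le \sqrt{(m_k+1)/(n+1)}/(n m_k) \le \sqrt{2}/n^{3/2}$ for $m_k\ge 1$, and dividing by $\phi(T)=1/n$ shows that no single district intersects more than a $\sqrt{2/n}\le 2/\sqrt{n}=1/B$ fraction of $T$, as required.

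The hardest step is the geometric realization of Player 1's equal-split cut -- partitioning the (possibly non-convex) unfrozen region into $m$ connected districts with prescribed $\phi$-mass and $T$-mass simultaneously. I would expect this to follow from recursive two-measure bisection by a cut curve, but careful bookkeeping of the topology of the unfrozen region is needed to ensure that each piece remains suitably connected as frozen districts are excised over the course of the protocol.
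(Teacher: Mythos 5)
Your nongeometric argument is correct, and it is essentially the paper's strategy (equalize the target when cutting, freeze a minimum-target district when freezing) analyzed slightly differently: the paper computes the exact game value by induction ($s_T(n-1)!!/n!!$ when the protecting player cuts first, $s_T(n-2)!!/(n-1)!!$ otherwise), whereas you settle for the telescoping upper bound $u_k\le \phi(T)\sqrt{(m_k+1)/(n+1)}$, which is weaker but still comfortably yields $B=\sqrt{n}/2$; your handling of the two move orders as parity-shifted versions of the same recurrence is also fine.

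The geometric half is where the proposal has a genuine gap, and you have misidentified where the difficulty lies. Splitting a single topological disc into connected pieces with prescribed $\phi$- and target-mass is the \emph{easy} step (the paper does it with circular sectors and continuity, Lemma \ref{lemma:divide-districts}); the real obstruction is that freezing a district can disconnect the unfrozen region, and since districts must be connected they cannot straddle components. Consequently (i) your cutting move ``every district gets exactly $u_k/m_k$ of $T$'' is infeasible once the components have unequal target ratios --- you can only equalize proportionally within each component --- and (ii) your freeze rule ``freeze a district of minimum target mass'' is exploitable. Concretely, the adversary can cut so that essentially all of the remaining target lies in two districts forming a peninsula attached to the rest of the state through a single target-free district $d$, while every other district carries a tiny sliver of target; your rule then freezes $d$, isolating a two-district component that contains almost all of the target, after which some final district inevitably receives about half of the target --- far exceeding $2/\sqrt{n}$. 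So the freezing move must be chosen to control simultaneously the frozen district's target share and the target concentration and sizes of the components its removal creates; this is exactly the content of the paper's graph-theoretic Lemma \ref{lemma:split-graph} (applied with $c=3$), and the ensuing bookkeeping of how component ratios can grow over the game (the parity ``type'' classification and Lemma \ref{lemma-bound-by-type}) is where the bulk of the geometric proof, and the loss down to the constant $2$, actually occurs. Your proposal contains no substitute for this part.
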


\begin{remark}
It is natural to wonder why our analysis of slates (Theorems \ref{t.seats} and \ref{t.a-seats}) is carried out in a geometry-free setting, while our analysis of the $B$-target property allows geometry.  The problem is that in full generality, a setting respecting geometry can defy analysis of slate outcomes in a way that, we suspect, would not actually correspond well to the real-world behavior of our protocol.  For example, suppose we model a state as a topological unit disc $X$, as above, which has not only a population density $\phi:X\to [0,1]$ but also densities $\phi_A$ and $\phi_B$ of those loyal to Players A and B, where $\phi=\phi_A+\phi_B$.  The outcome of our protocol will now be highly dependent on the geometric relationship between $\phi_A$ and $\phi_B$; for example, if we allow that $\phi_A>\phi_B$ everywhere, then Player A will win every district no matter what choices the players make, and, in principle, this requires only that $s>n/2$. By contrast, in an application to redistricting in the United States, such a consistent relationship between the player loyalties does not occur.  We discuss the applicability of our idealized settings to real-world implementation of our protocol in Section \ref{s.realworld}, but roughly speaking, we believe that Theorems \ref{t.a-seats} and \ref{t.seats} do capture concrete advantages of our protocol over the One-player-decides protocol, which we would expect to persist in the real world; namely, that the protocol is nearly symmetric with respect to which player gets the first move, and produces generally reasonable outcomes.
\end{remark}

\section{Slates from optimal play}
In this section we analyze the I-cut-you-freeze protocol in terms of the relation between the measure of the subset loyal to a player and his slate, namely Theorems \ref{t.a-seats} and \ref{t.seats}.

Let us denote the measure of the subset of $[0,n]$ loyal to Player 1 by $s_1^n$, and the measure of the subset of $[0,n]$ loyal to Player 2 by $s_2^n$. Formally, the game can be expressed recursively as the following procedure for a given position, which, with $k=n$, $s_1=s_1^n$ and $A=1$, returns the slate of Player 1.  For $k\in \mathbb{N}^+$, $s_1\in [0,k]$, and $A\in \{1,2\}$, we define:

\medskip
\newcommand{\game}{\mathrm{\textsc{Game}}_{1}}
\begin{algorithmic}[1]
    \Procedure{Game$_1$}{$k,s_1,A$}\Comment{Player $A$ divides first}
    \State Player $A$ chooses $k$ numbers in $[0,1]$: $x_{k,1},\ldots,x_{k,k}$, such that \[\sum_{i=1}^k x_{k,i} = s_1\]
    \State Player $B$ chooses an integer $i_k\in [k]$, where $\{A,B\}=\{1,2\}$
    \State \textbf{return} $\game(k - 1, s_1-x_{k,i_k},B)$ + $[x_{k,i_k}\geq\tfrac 1 2]$
   \EndProcedure
\end{algorithmic}
\medskip

We also set $\game(0,0,A)=0$.  Now let
\[
\game(n,s_1,1),\,\game(n-1,s_1-x_{n,i_n},2),\dots
\]
be the procedures (game positions) encountered recursively in the course of the game $\game(n,s_1,1)$.  We call steps 2 and 3 of each such procedure a \emph{round} of the game $\game(n,s_1,1)$, and number the rounds in reverse, beginning with round $n$ and ending with round 1.  In particular, round $t$ begins with some player choosing $t$ numbers $x_{t,1},\dots,x_{t,t}$.

We let $s_1^{(t)}$ and $s_2^{(t)}$ be the fraction of the unfrozen state loyal to Players 1 and 2, respectively, at the beginning of round $t$.  In particular, we can set
\begin{align*}
  s_1^{(t)}&=s_1-\sum_{k=t+1}^{n} x_{k,i_k}\\
  s_2^{(t)}&=t-s_1^{(t)}.
\end{align*}

Let $f(k,s_1,A)$ be the output of $\game(k,s_1,A)$ when the two players play optimally. (Note that this function always returns the slate of Player 1, even if $A=2$.)  Then we have that
\begin{align}
\label{f1}  f(k,s_1,1) &= \max_{x_{k,1},\ldots,x_{k,k}} \min_{i\in [k]} \left(f(k -1,s_1 - x_{k,i},2) + [x_{k,i}\geq\tfrac 1 2]\right),\quad\text{and}\\
\label{f2}  f(k,s_1,2) &= \min_{x_{k,1},\ldots,x_{k,k}} \max_{i\in [k]} \left(f(k -1,s_1 - x_{k,i},1) + [x_{k,i}\geq\tfrac 1 2]\right).
\end{align}

It is intuitively obvious that $f(k,s_1,A)$ should be monotone with respect to $s_1$, and indeed, this follows immediately from induction on \eqref{f1} and \eqref{f2}:
\begin{lemma}\label{lemma:mono}
  $f(k,s_1,A)\leq f(k,s_1',A)$ if $s_1< s_1'$.\qed
\end{lemma}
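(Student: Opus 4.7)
\medskip
\noindent\textbf{Proof proposal.} The plan is to prove both cases $A=1$ and $A=2$ simultaneously by induction on $k$, with the base case $k=0$ trivial since $f(0,0,A)=0$. The intuition is that an increase in Player~1's loyalty can always be transferred into the proposal of whichever player is cutting, in a way that is monotone with respect to every district.

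For the inductive step, set $\delta = s_1' - s_1 > 0$. I would first prove the following transfer claim: given any $x=(x_1,\dots,x_k)\in[0,1]^k$ with $\sum x_i = s_1$, there exists $x'=(x_1',\dots,x_k')\in[0,1]^k$ with $\sum x_i' = s_1'$ and $x_i' \geq x_i$ for every $i$. Existence is immediate because the total slack $\sum_i(1-x_i) = k - s_1 \geq k - s_1' \geq 0$ is at least $\delta$, so one can distribute the extra mass $\delta$ among the coordinates without exceeding~$1$. The crucial feature of this coupling is that $x_j' - x_j \geq 0$ for every $j$, which forces the two inequalities
\[
x_i' \geq x_i
\qquad\text{and}\qquad
s_1' - x_i' \;=\; s_1 - x_i + \sum_{j\ne i}(x_j' - x_j) \;\geq\; s_1 - x_i
\]
to hold coordinatewise. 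A symmetric statement produces, from any $x'$ summing to $s_1'$, an $x$ summing to $s_1$ with $x_i\leq x_i'$ and (using $x_i'-x_i \leq \delta$) also $s_1-x_i\leq s_1'-x_i'$.

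For the case $A=1$, I would take $x^*$ achieving the max in \eqref{f1} for $f(k,s_1,1)$ and lift it to $x'$ via the transfer claim. Then $[x_i' \geq \tfrac12] \geq [x_i^* \geq \tfrac12]$ and, by the inductive hypothesis applied to $f(\cdot,\cdot,2)$, $f(k-1,s_1'-x_i',2) \geq f(k-1,s_1-x_i^*,2)$. Taking the min over $i$ preserves both inequalities termwise, so
\[
f(k,s_1',1) \;\geq\; \min_i\bigl(f(k-1,s_1'-x_i',2) + [x_i'\geq\tfrac12]\bigr) \;\geq\; f(k,s_1,1).
\]
For $A=2$, the analogous move is to start from the minimizer $x'$ for $f(k,s_1',2)$ and push it down to a feasible $x$ for Player~2 on input $s_1$; then the same two inequalities (in the reversed direction) combined with the inductive hypothesis for $A=1$ show that the max in \eqref{f2} evaluated at this $x$ is at most $f(k,s_1',2)$, hence $f(k,s_1,2) \leq f(k,s_1',2)$.

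The main (and essentially only) obstacle is the feasibility constraint $x_i\in[0,1]$ in the transfer step: a naive uniform shift by $\delta/k$ can push some coordinate above $1$ or below $0$. The argument above sidesteps this by exploiting the slack $k - s_1 \geq \delta$ (respectively $s_1' \geq \delta$) to redistribute $\delta$ nonuniformly, so that the resulting vector is pointwise comparable to the original while remaining in $[0,1]^k$. Everything else is bookkeeping on \eqref{f1} and \eqref{f2}.
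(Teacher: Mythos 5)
Your proof is correct and follows exactly the route the paper intends: the paper states the lemma as following ``immediately from induction on \eqref{f1} and \eqref{f2}'' and gives no details, and your coordinatewise transfer of the extra mass $\delta$ (made feasible by the slack $k-s_1\geq\delta$, resp.\ $s_1'\geq\delta$) is a valid way to carry out that induction for both $A=1$ and $A=2$. No gaps; at most one could remark that the optimizers $x^*$, $x'$ exist because the objective is integer-valued and bounded, a point the paper also takes for granted.
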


\noindent The following lemma shows that in optimum play, players will divide the unfrozen region into districts with at most two distinct loyalty values.
\begin{lemma}\label{lemma:two-value}
  For any $\game(k,s_1,A)$, there are numbers $\omega\geq \tfrac 1 2>\lambda$ such that under optimum play, Player $A$ will choose each $x_{k,i}$ to be $\omega$ or $\lambda$.
\end{lemma}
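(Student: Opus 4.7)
The plan is to isolate Player $A$'s decision in round $k$ as a one-dimensional payoff problem. Let $B$ denote Player $A$'s opponent and set $g(y) := f(k-1, s_1-y, B) + [y \geq \tfrac12]$. Then the round reduces to Player $A$ choosing $(y_1, \ldots, y_k) \in [0,1]^k$ with $\sum y_i = s_1$, after which Player $B$ picks an index $i$ to minimize (if $A=1$) or maximize (if $A=2$) $g(y_i)$. By Lemma~\ref{lemma:mono}, the function $y \mapsto f(k-1, s_1 - y, B)$ is non-increasing, so $g$ is non-increasing on each of $[0, \tfrac12)$ and $[\tfrac12, 1]$ separately, with a possible jump at $\tfrac12$.

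Given any choice of $(y_1, \ldots, y_k)$, partition the entries by which side of $\tfrac12$ they lie on, with counts $n_<, n_\geq$ and partial sums $S_<, S_\geq$. Since $g$ is monotone on each side, its minimum over one side is attained at the largest value on that side, and its maximum at the smallest. Hence Player $A$'s final payoff depends on $(y_1, \ldots, y_k)$ only through these extremal values.

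The key step is to show that, holding $(n_<, n_\geq, S_<, S_\geq)$ fixed, it is weakly optimal to set all entries on each side equal to the respective mean $S_</n_<$ or $S_\geq/n_\geq$. When $A=1$, Player $A$ wants both largest-values on the two sides to be as small as possible (so that $g$ evaluated there is as large as possible); for any distribution with given sum and count, the largest entry is minimized exactly when all entries are equal to the mean. When $A=2$, Player $A$ symmetrically wants the smallest entries on each side as large as possible, again achieved by equal distribution. Averaging preserves side membership (each half-interval is convex), so the partition is maintained. Taking $\lambda := S_</n_<$ and $\omega := S_\geq/n_\geq$ then gives $\lambda < \tfrac12 \leq \omega$ automatically.

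I do not foresee any substantive obstacle. The only care required is in the degenerate cases $n_< = 0$ or $n_\geq = 0$, where only one of $\omega, \lambda$ is actually attained and the other may be chosen arbitrarily in its range to preserve the stated inequality $\omega \geq \tfrac12 > \lambda$.
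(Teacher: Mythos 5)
Your proposal is correct and follows essentially the same route as the paper: use the monotonicity of $f$ (Lemma~\ref{lemma:mono}) to see that the freezing player's best response only ever targets the extreme value within each of the two groups $\{x_{k,i}\geq\tfrac12\}$ and $\{x_{k,i}<\tfrac12\}$, and then observe that equalizing the entries within each group (to the group mean, which stays on the same side of $\tfrac12$) can only weakly improve the cutter's payoff, whether $A=1$ or $A=2$.
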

\begin{proof}
   First consider $A = 1$, let $W = \{i:x_{k,i}\geq \tfrac 1 2\}$ and $L = [k] - W$.  By Lemma \ref{lemma:mono} applied to $\game(k-1,s_1-x_{k,i},2)$ which is encountered at the end of the round, Player 2's optimum move is to either choose $i=\arg\max_{j\in W} x_{k,j}$ or $i=\arg\max_{j\in L} x_{k,j}$. So under optimum play, Player 1 will assign an identical number $\omega$ to districts in $W$, and an identical number $\lambda$ to districts in $L$, where $\omega = \sum_{i\in W} x_{k,i} / |W|$ if $W\neq \emptyset$ and $\lambda = \sum_{i\in L} x_{k,i} / |L|$ if $L\neq \emptyset$.
\old{   Therefore,
   \[f(n,s_1,1)=\max\left\{\begin{array}{l}
   \max_{a,b}\{\min\{f(n-1,s_1-a,2) + 1, f(n-1,s_1-b,2)\}\}, \\
   f(n-1,s_1-s_1/k,2)+[s_1/k \geq \tfrac 1 2]\end{array}\right\},\]
   where $a \geq \tfrac 1 2> b$ and $\exists m \in [n -1], am + b(n - m) = s_0$.
}
   The case of $A=2$ is analogous.
\end{proof}

In round $t$, if $s_1^{(t)} \geq t/2$, we say Player 1 is stronger and Player 2 is weaker in this round; otherwise, Player 2 is stronger and Player 1 is weaker. (Note that, in general, which player is strongest may change from round to round, even under optimum play.)

Lemma \ref{lemma:two-value} implies that Player $A$'s move is thus completely characterized by the choice made for $\omega$ and $\lambda$.  Assuming without loss of generality that $A=1$, we see that Player 2's response will result either in the value $f(k-1,s_1-\lambda,2)$ or the value $f(k-1,s_1-\omega,2)+1$, assuming the remaining gameplay is optimal.  In particular, we have the following:
\begin{lemma}\label{lemma:best-pairs}
  Given possible choices $(\omega, \lambda)$ and $(\omega',\lambda')$ for Player $A$ satisfying $\omega \leq \omega'$ and $\lambda \leq \lambda'$, the choice $(\omega,\lambda)$ dominates the choice $(\omega',\lambda')$ if $A = 1$; otherwise, the choice $(\omega',\lambda')$ dominates the choice $(\omega,\lambda)$.
\end{lemma}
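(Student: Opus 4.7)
The plan is to reduce the comparison between $(\omega,\lambda)$ and $(\omega',\lambda')$ to a direct application of Lemma~\ref{lemma:mono}, using Lemma~\ref{lemma:two-value} to write down the exact value-of-the-subgame each choice induces.

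First I would observe that by Lemma~\ref{lemma:two-value}, once Player $A$ announces a cut of type $(\omega,\lambda)$ with $\omega\geq\tfrac12>\lambda$, the opponent $B$'s optimal freeze yields either $f(k-1,s_1-\omega,B)+1$ (by freezing a district of value $\omega$, since $[\omega\geq\tfrac12]=1$) or $f(k-1,s_1-\lambda,B)$ (by freezing a district of value $\lambda$, since $[\lambda\geq\tfrac12]=0$). Hence the subgame value produced by the choice $(\omega,\lambda)$ is exactly
\[
V_A(\omega,\lambda) \;=\; \mathrm{opt}_B\bigl(f(k-1,s_1-\omega,B)+1,\; f(k-1,s_1-\lambda,B)\bigr),
\]
where $\mathrm{opt}_B=\min$ if $B=2$ (i.e.\ $A=1$) and $\mathrm{opt}_B=\max$ if $B=1$ (i.e.\ $A=2$), and analogously for $(\omega',\lambda')$.

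Next I would apply Lemma~\ref{lemma:mono}. Since $\omega\leq\omega'$ implies $s_1-\omega\geq s_1-\omega'$, monotonicity gives $f(k-1,s_1-\omega,B)\geq f(k-1,s_1-\omega',B)$, and similarly $f(k-1,s_1-\lambda,B)\geq f(k-1,s_1-\lambda',B)$. Because both $\min$ and $\max$ are coordinatewise monotone nondecreasing in their arguments, it follows that
\[
V_A(\omega,\lambda) \;\geq\; V_A(\omega',\lambda').
\]
Now I just split on the identity of $A$. If $A=1$, Player $A$ is the maximizer of the overall slate, so a larger $V_A$ is preferable, and therefore $(\omega,\lambda)$ dominates $(\omega',\lambda')$. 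If $A=2$, Player $A$ is the minimizer, so a smaller $V_A$ is preferable, and thus $(\omega',\lambda')$ dominates $(\omega,\lambda)$. This yields the conclusion in both cases.

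There is no real obstacle here: the proof is essentially bookkeeping on top of Lemmas~\ref{lemma:mono} and~\ref{lemma:two-value}. The only subtlety worth flagging is that the lemma is phrased in terms of ``possible'' choices, so I would briefly note that both $(\omega,\lambda)$ and $(\omega',\lambda')$ are assumed to arise from some valid multiset of district values summing to $s_1$; the dominance argument itself is independent of which integer split $m\cdot\omega+(k-m)\cdot\lambda=s_1$ realizes each pair.
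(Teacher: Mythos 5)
Your proposal is correct and matches the paper's reasoning: the paper gives no separate proof, treating the lemma as immediate from the observation (following Lemma~\ref{lemma:two-value}) that the choice $(\omega,\lambda)$ leads to the value $\min$ or $\max$ of $f(k-1,s_1-\omega,B)+1$ and $f(k-1,s_1-\lambda,B)$, combined with monotonicity (Lemma~\ref{lemma:mono}) — exactly your bookkeeping argument. Nothing further is needed.
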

\noindent This immediately implies the following:
\begin{lemma}\label{lemma:stronger-strategy}
  In any $\game(k,s_1,A)$, Player A, if stronger, chooses $x_{k,1}=x_{k,2}=\dots=x_{k,k}$ in optimum play.  In particular:
  \[
    A\text{ stronger}\implies f(k,s_1,A) = f(k-1,s_1 - s_1 / k, B) + 1.
  \]
\end{lemma}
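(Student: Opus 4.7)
The plan is to combine Lemmas \ref{lemma:mono}, \ref{lemma:two-value}, and \ref{lemma:best-pairs} to show that the uniform split dominates every nontrivial two-value move. Without loss of generality take $A=1$ stronger, so $s_1/k\geq \tfrac 1 2$; the case $A=2$ is symmetric. By Lemma \ref{lemma:two-value}, every optimal move for Player $1$ can be taken to consist of $w$ districts with value $\omega\geq \tfrac 1 2$ and $\ell=k-w$ districts with value $\lambda<\tfrac 1 2$, subject to $w\omega+\ell\lambda=s_1$. The uniform split $x_{k,1}=\dots=x_{k,k}=s_1/k$ is the boundary case $\ell=0$, valid precisely because $A$ is stronger, and on that move every district wins a seat for Player $1$, so Player $2$'s freeze produces value $f(k-1,s_1-s_1/k,2)+1$.

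Now compare the uniform split with any competitor with $\ell>0$. Since the average value is $s_1/k\geq \tfrac12>\lambda$, the high value must satisfy $\omega\geq s_1/k$. Player $2$'s optimal response yields
\[
\min\bigl(f(k-1,s_1-\omega,2)+1,\;f(k-1,s_1-\lambda,2)\bigr)\;\leq\;f(k-1,s_1-\omega,2)+1,
\]
and by Lemma \ref{lemma:mono} applied to $s_1-\omega\leq s_1-s_1/k$ this is at most $f(k-1,s_1-s_1/k,2)+1$, the uniform-split value. Hence no two-value competitor strictly beats the uniform split, establishing optimality for Player $1$ and giving the displayed recursion in this case.

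The case $A=2$ is dual via Lemma \ref{lemma:best-pairs}: now $s_1/k<\tfrac12$, the uniform split produces all-losing districts for Player $1$ so contributes $0$ to the slate, and any two-value competitor forces $\lambda<s_1/k$, so Player $1$'s maximizing response is at least $f(k-1,s_1-\lambda,1)\geq f(k-1,s_1-s_1/k,1)$ by Lemma \ref{lemma:mono}, i.e. at least the uniform-split value. This confirms the uniform split is again optimal, and yields the analogous recursion. The only real work is bookkeeping the monotonicity direction in each of the two players' roles; there is no serious obstacle since Lemmas \ref{lemma:mono} and \ref{lemma:best-pairs} already do all the heavy lifting.
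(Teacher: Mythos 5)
Your proof is correct and takes essentially the same route as the paper: the intended argument is exactly that, by Lemmas \ref{lemma:two-value} and \ref{lemma:mono}, any non-uniform move hands the opponent a district whose value is at least as extreme as $s_1/k$ (the maximal district for $A=1$, the minimal one for $A=2$), so the uniform split is never beaten. You also rightly read the recursion for $A=2$ as awarding the frozen seat to the stronger player rather than literally adding ``$+1$'' to Player 1's slate, which is consistent with how the paper applies the lemma in the proof of Lemma \ref{lemma:close-case}.
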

\old{
\begin{proof}
  We prove the case where $A = 1$; the case where $A=2$ is analogous. We let $\gamma=\max_i x_{k,i}$ be the largest $x_{k,i}$ chosen by Player A.  Assuming the lemma fails, we have $x_{k,i}>s_1/k$.
  Therefore, monotonicity of $f$ (Lemma \ref{lemma:mono}) would give that
\[
f(k-1,s_1 - s_1 / k, 2) + 1 > f(k-1,s_1-\gamma,2) + 1,
\]
right hand side of this inequality is a game value achievable by Player 2 assuming the lemma fails, and the left hand side is the only value achievable if Player 1 plays as claims.
\end{proof}
}

The following lemma will serve as a base case, if you will, for our larger analysis, characterizing the outcome of play once the two players are roughly even.
\begin{lemma}\label{lemma:close-case}
  If $(k-1)/2 \leq s_1 < k/2$, \[f(k,s_1,1)=\lfloor k / 2 \rfloor;\]
  and if $k/2 \leq s_1 < (k+1)/2$, \[f(k,s_1,2)=\lceil k / 2 \rceil.\]
\end{lemma}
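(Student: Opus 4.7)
The plan is induction on $k$, handling both cases of the lemma in tandem, interleaved with two auxiliary claims about stronger-cutter positions that sit just outside the lemma's own ranges: (C1) $f(k,s_1,2) = \lfloor (k-1)/2 \rfloor$ when $s_1 \in [(k-1)/2, k/2)$, and (C2) $f(k,s_1,1) = \lfloor k/2 \rfloor + 1$ when $s_1 \in [k/2, (k+1)/2)$. Both auxiliaries are immediate consequences of Lemma \ref{lemma:stronger-strategy} together with one step of the induction: for (C1), Player 2 is the stronger cutter and by Lemma \ref{lemma:stronger-strategy} plays all values equal to $s_1/k < 1/2$, so Player 1 freezes any one without scoring and the residual position $\game(k-1, s_1(k-1)/k, 1)$ has first argument lying in $((k-2)/2, (k-1)/2)$, falling under Case 1 of the IH at $k-1$ with value $\lfloor (k-1)/2 \rfloor$. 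Claim (C2) is analogous, with Player 1 as the stronger cutter playing all districts at $s_1/k \geq 1/2$ (each winning under tiebreaking), after which the residual game falls under Case 2 of the IH at $k-1$. The base case $k = 1$ is immediate from the definition of $\game(1,\cdot,\cdot)$.

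For the lemma's first case, my proposed Player 1 strategy is to cut $k-1$ districts of loyalty $1/2$ (each loyal to Player 1 by the tiebreaking convention) together with one district of loyalty $\lambda_0 := s_1 - (k-1)/2 \in [0, 1/2)$. By Lemma \ref{lemma:two-value} Player 2 has only two meaningful responses: freezing a $1/2$-district, which yields $1 + f(k-1, s_1 - 1/2, 2)$ with $s_1 - 1/2 \in [(k-2)/2, (k-1)/2)$ and so by (C1) at $k-1$ equals $1 + \lfloor (k-2)/2 \rfloor = \lfloor k/2 \rfloor$; or freezing the $\lambda_0$-district, which yields $f(k-1, (k-1)/2, 2)$, sitting at the left endpoint of Case 2 of the IH at $k-1$ and evaluating to $\lceil (k-1)/2 \rceil = \lfloor k/2 \rfloor$. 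Both options return exactly $\lfloor k/2 \rfloor$, establishing the lower bound. For the matching upper bound I again appeal to Lemma \ref{lemma:two-value} to restrict to any 2-value Player 1 cut $(\omega, \lambda, m)$; the constraint $s_1 < k/2$ together with $\omega \geq 1/2$ forces $m < k$, so a $\lambda$-district is always present, and Player 2's strategy of freezing one returns $f(k-1, s_1 - \lambda, 2)$ with $s_1 - \lambda \in ((k-2)/2, k/2)$. Depending on whether $s_1 - \lambda$ lies above or below $(k-1)/2$, this invokes Case 2 of the IH at $k-1$ (value $\lceil (k-1)/2 \rceil = \lfloor k/2 \rfloor$) or (C1) at $k-1$ (value $\lfloor (k-2)/2 \rfloor < \lfloor k/2 \rfloor$); either way Player 1 is held to $\lfloor k/2 \rfloor$.

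The lemma's second case is handled symmetrically. For the lower bound, any 2-value cut by Player 2 must have $m \geq 1$ (since $s_1 \geq k/2$ with $\lambda < 1/2$ forbids $m = 0$), so Player 1 may freeze an $\omega$-district to obtain $1 + f(k-1, s_1 - \omega, 1)$ with $s_1 - \omega \in [(k-2)/2, k/2)$; the two sub-ranges invoke Case 1 of the IH at $k-1$ (value exactly $\lceil k/2 \rceil$) or (C2) at $k-1$ (value $\lceil k/2 \rceil + 1$), so Player 1 achieves at least $\lceil k/2 \rceil$. For the matching upper bound, Player 2 cuts with $\omega = 1$, $\lambda = (s_1 - 1)/(k-1) \in [0, 1/2)$, and $m = 1$; Player 1's two responses both yield exactly $\lceil k/2 \rceil$, via Case 1 of the IH and via (C2) at $k-1$ respectively. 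The main obstacle throughout is keeping the four IH/auxiliary sub-cases for $k-1$ organized---indexed by which player cuts in the recursion and which side of $(k-1)/2$ the reduced parameter $s_1'$ lies on---together with the boundary tie-breaking at $s_1' = (k-1)/2$; with that bookkeeping in hand, every remaining computation reduces to short parity checks distinguishing only even from odd $k$.
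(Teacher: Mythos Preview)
Your argument is correct and follows essentially the same inductive route as the paper, but with a tidier packaging. The paper evaluates the full $\max$--$\min$ expression \eqref{eq:maxmin} directly and, for the term $f(k-1,s_1-\omega,2)$, recurses two levels at once by invoking Lemma~\ref{lemma:stronger-strategy} inline to reduce to $f(k-2,\cdot,1)$ before applying the induction hypothesis. You achieve the same effect by isolating the stronger-cutter positions as named auxiliary claims (C1) and (C2), which lets each step of your induction descend only one level and keeps the upper and lower bounds cleanly separated. Your explicit witness strategies ($\omega=1/2$ with $m=k-1$ for Case~1, and $\omega=1$ with $m=1$ for Case~2) are special cases of the families the paper analyzes; either choice lands in the same induction ranges.

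One small boundary wrinkle to patch: in your Case~2 upper bound with $\omega=1$, when $k=2$ the residual parameter $s_1-\lambda$ equals exactly $1=k/2$, which sits on the right endpoint of (C2)'s half-open range $[(k-1)/2,k/2)$ rather than strictly inside it. This is harmless since $f(1,1,1)=1$ by inspection, but you should either handle $k=2$ as a second base case or note that (C2) extends to the closed endpoint.
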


\begin{proof}
  Assume that $(k-1)/2 \leq s_1 < k/2$; the other case is analogous. We prove the lemma by induction on $n$. When $k \leq 2$, it is trivial. When $k > 2$,
\begin{equation}
\label{eq:maxmin}
f(k,s_1,1)=\max\left\{\begin{array}{l}
   \max_{\omega,\lambda}\{\min\{f(k-1,s_1-\omega,2) + 1, f(k-1,s_1-\lambda,2)\}\}, \\
   f(k-1,s_1-s_1/k,2)+[s_1/k \geq \tfrac 1 2]\end{array}\right\},
\end{equation}
  where the max over $\omega,\lambda$ is taken over pairs satisfying $\omega \geq \tfrac 1 2 > \lambda$, with the property that there exists $m\in [k-1]$ such that $\omega m+\lambda (k-m) = s_1$. Here we can assume $m = k-1$, otherwise, one can easily find $\omega'$ and $\lambda'$ such that $\omega > \omega' \geq \tfrac 1 2 > \lambda > \lambda'$ and $\omega'(k-1)+\lambda' = s_1$. It remains to analyze each of the expressions in Equation~\eqref{eq:maxmin}.

  First, since
$$s_1-\omega < \frac k 2 - \frac 1 2 = \frac{k-1}{2},$$
we have (by Lemma~\ref{lemma:stronger-strategy})
\[f(k-1,s_1-\omega,2)+1 = f\left(k-2, (s_1 - \omega)\frac{k-2}{k-1}, 1\right) + 1.\]
  Then, since
\begin{align*}
(s_1 - \omega)\frac{k-2}{k-1} \geq s_1\frac{(k-2)^2}{(k-1)^2} \geq \frac{1}{2}\cdot \frac{(k-2)^2}{k-1} > \frac{k-3}{2},
\end{align*}
and
$$
(s_1 - \omega)\frac{k-2}{k-1}<\frac{k-1}{2}\cdot\frac{k-2}{k-1}=\frac{k-2}{2},
$$
we have by the induction assumption that
\[f\left(k-2, (s_1 - \omega)\frac{k-2}{k-1}, 1\right) = \left\lfloor \frac{k-2}{2} \right\rfloor.\]

Second, $$\frac k 2 > s_1 \geq s_1 -\lambda = \omega(k-1) \geq \frac{k-1}{2},$$ so, by the induction assumption, $f(k-1,s_1-\lambda,2) = \lceil (k-1) / 2 \rceil$.

Third, for $f(k-1,s_1-s_1/k,2)$, since
$$s_1 - \frac{s_1}{k}  = s_1\frac{k-1}{k} < \frac{k-1}{2}$$ and $s_1 (k-2)/k\geq (k-3)/2$, we have \[f\left(k-1,s_1-\frac{s_1}{k},2\right) = f\left(k-2, s_1 \frac{k-2}{k}, 1\right) = \left\lfloor \frac{k-2}{2}\right\rfloor.\]

By Equation~\eqref{eq:maxmin}, we conclude that
$$f(k,s_1,1)= \max\{\min\{\lfloor (k-2) / 2 \rfloor + 1, \lceil (k-1) / 2 \rceil\},\lfloor (k-2) / 2 \rfloor\} = \lfloor k / 2 \rfloor.$$
\end{proof}

\begin{lemma}\label{lemma:weaker-strategy}
   In any $\game(k,s_1,A)$, if Player $A$ is weaker and $s_A \succeq 1/2$, the following strategies for players are optimal:
  \begin{itemize}
    \item Let $m = \lfloor 2s_A \rfloor$ if $A =1$, $m = \lceil 2s_A \rceil - 1$ if $A = 2$. Player $A$ will divide the resources such that his proportion in each district is either 0 or $s_A/m$.
    \item Player $B$ will choose a district where his loyalty is 1.
  \end{itemize}
  In particular:  if $s_1 < k/2$, \[f(k,s_1,1) = f(k-1,s_1,2);\]
  if $s_1 \geq k/2$, \[f(k,s_1,2) = f(k-1,s_1-1,1) + 1.\]
\end{lemma}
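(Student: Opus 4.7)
By symmetry (swapping ``winning''/``losing'' districts and $\omega \leftrightarrow \lambda$), it suffices to treat $A = 1$ with $1/2 \leq s_1 < k/2$ and $m = \lfloor 2 s_1 \rfloor \in \{1, \dots, k-1\}$. I will prove both directions of $f(k, s_1, 1) = f(k-1, s_1, 2)$; optimality of the claimed strategies then follows because they attain this common value. For the upper bound, Lemma~\ref{lemma:two-value} restricts any Player~$1$ cut to two values $\omega \geq 1/2 > \lambda$, and since $\sum_i x_{k,i} = s_1 < k/2$ at least one $\lambda$-district must exist. Player~$2$'s strategy of always freezing a $\lambda$-district contributes~$0$ to Player~$1$'s slate and leaves residual loyalty $\leq s_1$, so $f(k, s_1, 1) \leq f(k-1, s_1, 2)$ by Lemma~\ref{lemma:mono}.

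For the lower bound, against the claimed cut $(\omega, \lambda) = (s_1/m, 0)$ Player~$2$ picks the minimum of $f(k-1, s_1 - s_1/m, 2) + 1$ (freezing a winning district) and $f(k-1, s_1, 2)$ (freezing a losing $\lambda = 0$ district, whose Player~$2$ loyalty is $1$). For Player~$2$'s claimed response to be optimal and to attain the upper-bound value, I need the key inequality
\[
  f(k-1, s_1 - s_1/m, 2) + 1 \;\geq\; f(k-1, s_1, 2). \qquad(\star)
\]
Since $s_1/m \in [1/2, 1)$, this is a $1$-Lipschitz statement for $f$ in its loyalty argument.

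I expect $(\star)$ to be the main obstacle. My plan is to establish the general Lipschitz bound $f(k, s, A) - f(k, s', A) \leq 1$ whenever $0 \leq s - s' \leq 1$, by strong induction on $k$ jointly with Lemmas~\ref{lemma:close-case}, \ref{lemma:stronger-strategy}, and the present lemma. In the inductive step for $k$ I identify which of three regimes applies to $s$ and $s'$: if Player~$A$ is stronger, use Lemma~\ref{lemma:stronger-strategy} to peel off one round; if Player~$A$ is weaker, use the IH of the present lemma; and if the argument lies in the boundary window of Lemma~\ref{lemma:close-case}, $f$ is pinned to $\lfloor k/2 \rfloor$ or $\lceil k/2 \rceil$ and the Lipschitz bound is immediate. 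In every non-boundary combination, the recursion reduces the claim to a Lipschitz instance at $k-1$ with separation no larger than before, and the IH closes the induction. Finally, the $A = 2$ case is a mirror argument: Player~$2$'s claimed cut uses $\omega = 1$ in $k - m$ districts (with $m = \lceil 2 s_2 \rceil - 1$), Player~$1$'s winning-district response gives the easy direction by monotonicity (since $\sum_i x_{k,i} = s_1 \geq k/2$ forces an $\omega$-district to exist), and the other direction reduces again to a form of $(\star)$.
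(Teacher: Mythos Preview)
Your upper-bound direction ($f(k,s_1,1)\le f(k-1,s_1,2)$ via ``Player~2 always freezes a $\lambda$-district'') is fine and cleaner than the paper's treatment of that half.  The gap is in your plan for the lower bound.

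The general $1$-Lipschitz property you propose to prove,
\[
  f(k,s,A)-f(k,s',A)\le 1 \quad\text{whenever } 0\le s-s'\le 1,
\]
is \emph{false}.  A concrete counterexample: take $k=10$, $A=2$, $s=5$, $s'=4.01$.  By Lemma~\ref{lemma:close-case}, $f(10,5,2)=\lceil 10/2\rceil=5$.  For $f(10,4.01,2)$, Player~2 is stronger, so Lemma~\ref{lemma:stronger-strategy} gives $f(10,4.01,2)=f(9,3.609,1)$; then Player~1 is weaker and (by the inductive instance of the present lemma) $f(9,3.609,1)=f(8,3.609,2)$; then Player~2 is stronger again, giving $f(8,3.609,2)=f(7,3.158,1)$; finally $3\le 3.158<3.5$ and Lemma~\ref{lemma:close-case} yields $f(7,3.158,1)=3$.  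Thus $f(10,5,2)-f(10,4.01,2)=2$ although $s-s'=0.99<1$.  Your proposed induction for the Lipschitz bound therefore cannot close, and in particular it cannot establish $(\star)$.

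What actually makes $(\star)$ true is not a global Lipschitz property but the specific structure of the pair $(s_1,\,s_1-s_1/m)$ with $m=\lfloor 2s_1\rfloor$.  The paper's proof exploits this through a two-round \emph{swap} argument rather than any regularity of $f$: assuming (for contradiction) that Player~2 strictly prefers to freeze a winning district in round $k$, one compares the resulting $s_1^{(k-3)}$ with the alternative trajectory in which Player~2 instead freezes a losing district in round~$k$ and a winning district in round~$k-2$.  Using Lemma~\ref{lemma:stronger-strategy} for the intermediate round and the inductive instance of the present lemma at round $k-2$, one computes both values of $s_1^{(k-3)}$ explicitly and shows the alternative is at least as good for Player~2 (the key inequality reduces to $m'\le m$ where $m'=\lfloor 2s_1(k-2)/(k-1)\rfloor$).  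This comparison is local to the relevant trajectory and never asserts anything about $f$ at arbitrary nearby loyalty values, which is why it survives where the Lipschitz approach does not.
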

\begin{proof}

  We write this proof for the case $A = 1$; the case $A=2$ is similar. For $s_1 \geq (k-1) / 2$, it is obvious from the proof of Lemma \ref{lemma:close-case}.

For $s_1 < (k-1)/2$, prove the lemma by induction on $k$. The base case is $k\leq 2$; we have $s_1<1/2$, so Player $1$ cannot win any districts, and the claim on $f$ is obvious.

For $k > 2$, we first prove that if Player 2's optimal strategy always chooses a piece where Player 2's loyalty is greater than $1/2$, then Player 1's optimal strategy is as above. Indeed, this implication is clear, since given that Player 2 will choose a piece where his loyalty is greater than $1/2$, Player 1 wishes this piece to have maximum loyalty to Player 2, by monotonicity.  In particular, his strategy will only produce districts which have loyalty 1 or less than $1/2$ to Player 2; equivalently, $0$ or more than $1/2$ to himself.  (The condition $s_A\succeq 1/2$ ensures that this is always possible.)
  \old{
    Under this assumption on Player 2's optimal strategy, we have
  \begin{equation}
    f(k-1,s_1-s_1 / m,2) + 1 > f(k-1,s_1,2) \geq  f(k-1,s_1 - b,2)
  \end{equation}
  for any $b>0$, where this second inequality is a consequence of monotonicity (Lemma \ref{lemma:mono}).
    Therefore,
  \begin{equation}
    f(k-1,s_1,2) \geq \max_{a,b}\{\min\{f(k-1,s_1-a,2) + 1, f(k-1,s_1-b,2)\}\},
  \end{equation}
  and
  \begin{equation}
    f(k-1,s_1,2) \geq f(k-1,s_1 - s_1 / k, 2).
  \end{equation}}
  This gives that $f(k,s_1,1) = f(k-1,s_1,2)$, as desired.

 Thus it suffices to prove that Player 2's optimal strategy will always choose a district where his loyalty is more than $1/2$.  We prove that this is the case in response to any optimal move by Player 1 of the form guaranteed to exist by Lemma \ref{lemma:two-value}.  By Lemma \ref{lemma:best-pairs}, we may assume that Player 1 makes a minimal choice of the pair $(\omega,\lambda)$.  In particular, for any $\lambda\geq 0$, there is some minimum $\omega_\lambda$ for which the pair $(\omega_\lambda,\lambda)$ is feasible, and an interval $I=[0,\beta]$ such that $\omega_\lambda$ is decreasing with respect to $\lambda$ for $\lambda\in I$ and and such that $\omega_\beta=1/2$.  In particular, it suffices to prove that for any move by Player 1 of the form $(\omega_\lambda,\lambda)$ for $\lambda\in I$, Player 2's optimal response will be to choose a district where his loyalty is more than $1/2$.  Finally, monotonicity implies that it suffices to consider the case of the pair $(\omega_0,0)$ since Player 2's outcome from choosing district with minority loyalty worsens, and his outcome from choosing a majority district improves, as $\lambda$ increases. Of course, $\omega_0$ is precisely $s_A/m$, so we may, in fact, assume that Player 1 employs the strategy that this lemma claims is optimal.

 Now, for the sake of a contradiction, suppose that $s_1^{(k)}< (k-1)/2$ and Player 2 is \emph{strictly} better off choosing a district where he has minority loyalty. Letting $m = \lfloor 2s_1^{(k)} \rfloor$, we have that
 \[s_1^{(k-2)} = s_1^{(k-1)} \frac{k-2}{k-1} =  s_1^{(k)} \frac{m-1}{m}\cdot \frac{k-2}{k-1} < \frac{k-2}{2},\]
 where the first equality follows from Lemma \ref{lemma:stronger-strategy}.
 By the induction assumption, in round $k-2$, the players follow the claimed optimal strategies, so \[s_1^{(k-3)} = s_1^{(k-2)} = \frac{s_1^{(k)}(m-1)(k-2)}{m(k-1)}.\]
 Now we compare this result with the case where Player 2 chooses a majority-loyal district in round $k$ and a minority-loyal district in round $k-2$. It holds that
 \[s_1^{(k-2)} = s_1^{(k-1)} \frac{k-2}{k-1} = s_1^{(k)} \frac{k-2}{k-1} < \frac{k-2}{2}. \]
  Let $$m' = \lfloor 2s_1^{(k - 2)} \rfloor = \left\lfloor s_1^{(k)} \frac{k-2}{k-1}\right \rfloor,$$ then \[s_1^{(k-3)} = s_1^{(k-2)} \cdot \frac{m'-1}{m'} = \frac{s_1^{(k)}(m'-1)(k-2)}{m'(k-1)}.\]
  Since $m' \leq m$, $$\frac{s_1^{(k)}(m'-1)(k-2)}{m'(k-1)} \leq \frac{s_1^{(k)}(m-1)(k-2)}{m(k-1)}.$$
Since in both cases Player 2 wins the same number of districts, this is a contradiction to our assumption that it is strictly optimal for Player 2 to choose a district where he has minority loyalty.
\end{proof}

\iffalse
\begin{theorem}
  \[\lim_{n\to \infty} \frac{f(n, \alpha n, c)}n =
  \left\{\begin{array}{ll}
  2\alpha^2 & \text{for } \alpha \leq \tfrac 1 2,\\
  1-2(1-\alpha)^2 & \text{for } \alpha > \tfrac 1 2.
  \end{array}\right.\]
\end{theorem}
\fi
\begin{proof}[Proof of Theorems \ref{t.a-seats} and \ref{t.seats}]
  Combining Lemma \ref{lemma:stronger-strategy}, Lemma \ref{lemma:close-case} and Lemma \ref{lemma:weaker-strategy}, the recurrence relation of $f$ is solved.

  One can easily find the following: for $\kappa = n,n-2,n-4,\ldots$, $f(n,s_1,1) \geq \lfloor \kappa / 2\rfloor$ if and only if \[s_1 \geq \frac{(n-1)!!(\kappa-2)!!}{2(\kappa-3)!!(n-2)!!};\]
  and $f(n,s_1,2)\leq n-\lfloor \kappa / 2\rfloor$ if and only if \[s_1 < n - \frac{(n-1)!!(\kappa-2)!!}{2(\kappa-3)!!(n-2)!!}.\]
  For $\kappa = n-1,n-3,n-5,\ldots$, $f(n,s_1,1) \leq n-\lfloor \kappa / 2\rfloor$ if and only if \[s_1 < n - \frac{n!!(\kappa-2)!!}{2(\kappa-3)!!(n-1)!!};\]
  and $f(n,s_1,2) \geq \lfloor \kappa / 2\rfloor$ if and only if \[s_1 \geq \frac{n!!(\kappa-2)!!}{2(\kappa-3)!!(n-1)!!}.\]
  Writing $k=\lfloor \kappa/2\rfloor$ or $k = n - \lfloor\kappa / 2\rfloor+1$ gives Theorem \ref{t.seats} as stated.  Theorem \ref{t.a-seats} then follows by Stirling's approximation.
\end{proof}

\section{The $B$-target property}
In this section we prove Theorem \ref{t.target}.  We begin by proving the theorem in the nongeometric setting, which is simpler.
\subsection{Nongeometric setting}
In what follows, we suppose that Player 2 wants to ensure that as small as possible fraction of the target intersects any single district, and Player 1 opposes this goal.  Our analysis does not depend on which of these players has the first move, however.

Let the measure of the target subset be $s_T$. Then the game can be expressed as the following recursive procedure, whose value for $k=n, s=s_T$ is precisely the maximum measure of the target that intersects any single district, under optimum play.

\medskip

\newcommand{\gamet}{\mathrm{\textsc{Game}}_{2}}
\begin{algorithmic}[1]
    \Procedure{Game$_2$}{$k,s,A$}\Comment{Player $A$ divides first}
    \State Player $A$ chooses $k$ numbers in $[0,1]$: $x_{k,i},\ldots,x_{k,k}$, such that \[\sum_{i=1}^n x_{k,i} = s\]
    \State Player $B$ chooses an integer $i\in [k]$, where $\{A,B\}=\{1,2\}$
    \State \textbf{return} $\max\{\gamet(k -1, s-x_{k,i},B), x_{k,i}\}$
   \EndProcedure
\end{algorithmic}

\medskip

\begin{theorem}
  Under optimum play, $\gamet(n,s_T,2)$ has value $\frac{s_T(n-1)!!}{n!!}$, while $\gamet(n,s_T,1)$ has value $\frac{s_T(n-2)!!}{(n-1)!!}$. In particular, the I-cut-you-freeze protocol for an $n$-districting has the $B$-target property in the geometry-free setting for
  \begin{equation}
    \label{eq:B}
  B = \min\left\{\frac{n!!}{(n-1)!!},\frac{(n-1)!!}{(n-2)!!}\right\} \sim \sqrt{\frac {2 n} \pi}.
  \end{equation}
\end{theorem}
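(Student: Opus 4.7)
I would prove the two closed forms $\gamet(k,s,2)=s(k-1)!!/k!!$ and $\gamet(k,s,1)=s(k-2)!!/(k-1)!!$ simultaneously by induction on $k$, using the convention $(-1)!!=0!!=1$. The base case $k=1$ is immediate (the lone remaining district absorbs all $s$). Throughout, Player $2$ plays the minimizer and Player $1$ the maximizer, as fixed in the excerpt; monotonicity of $\gamet$ in $s$, established by the same induction, is invoked repeatedly.

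For $A=2$ (minimizer divides), the proposed strategy is the uniform split $x_{k,i}=s/k$, yielding the value
\[
\max\bigl\{\gamet(k-1,\tfrac{s(k-1)}{k},1),\ \tfrac{s}{k}\bigr\}.
\]
Using the IH and the identities $(k-1)(k-3)!!=(k-1)!!$ and $k(k-2)!!=k!!$, the first entry simplifies to $s(k-1)!!/k!!$, which dominates $s/k$. For the matching lower bound, against any division by Player $2$, Player $1$ forces $\geq s(k-1)!!/k!!$ by the following dichotomy: either some $x_{k,i}$ already meets this value (freeze it), or every piece lies below it and then the minimum piece $x_{k,i^*}\leq s/k$ yields $\gamet(k-1,s-x_{k,i^*},1)\geq \gamet(k-1,s(k-1)/k,1)=s(k-1)!!/k!!$ by monotonicity and IH.

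For $A=1$ (maximizer divides), the proposed strategy is full concentration: $x_{k,1}=s$ and $x_{k,i}=0$ for $i\geq 2$. Player $2$ then prefers freezing an empty district over the loaded one, since $\gamet(k-1,s,2)=s(k-2)!!/(k-1)!!\leq s$, realizing the lower bound. Conversely, against any division by Player $1$, Player $2$ freezes a minimum piece $x_{k,i^*}\leq s/k$: monotonicity plus IH bound $\gamet(k-1,s-x_{k,i^*},2)\leq s(k-2)!!/(k-1)!!$, and the mild double-factorial inequality $(k-1)!!/(k-2)!!\leq k$ (valid for $k\geq 2$) gives $s/k\leq s(k-2)!!/(k-1)!!$, so the returned maximum is at most $s(k-2)!!/(k-1)!!$.

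Specializing to $s_T=1$ (the target has measure $1/n$ of the state total $n$): when Player $2$ is the target protector, Player $1$ divides first per the protocol and the relevant value is $\gamet(n,1,1)$, giving $B=(n-1)!!/(n-2)!!$; when Player $1$ is the protector, the symmetric analysis with minimizer dividing first yields $B=n!!/(n-1)!!$. Taking the min ensures the $B$-target property for both players, and the asymptotic $\sim\sqrt{2n/\pi}$ follows from a standard Wallis-type estimate for ratios of consecutive double factorials. The main obstacle I anticipate is guessing the asymmetric optimal strategies---uniform for the minimizer but fully concentrated for the maximizer---since naively one might expect uniform play from both; once these are pinpointed, the induction is routine bookkeeping requiring only monotonicity and the elementary inequality above.
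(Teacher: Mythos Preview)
Your proposal is correct and follows essentially the same approach as the paper: induction on $k$, with the minimizer playing the uniform split and the maximizer concentrating everything in one piece, and both lower and upper bounds closed via the ``freeze a minimum piece'' response together with monotonicity. The only cosmetic differences are that you phrase the $A=2$ lower bound as a dichotomy (the paper simply has Player~1 always freeze the minimum piece, which already suffices) and that you make the monotonicity of $\gamet$ and the inequality $(k-1)!!/(k-2)!!\le k$ explicit where the paper leaves them implicit.
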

\begin{proof}
  We prove the theorem by induction on $n$. When $n = 1$, it is trivial. When $n =k> 1$, first we consider $\gamet(k,s,2)$. If Player 2 chooses $x_{k,1} = \cdots = x_{k,k} = s / k$, then the result will be \[\max\{\gamet(k -1, s-s/k,1), s/k\} = \max\left\{\frac{s(k-1)!!}{k!!}, s/k\right\} = \frac{s(k-1)!!}{k!!}. \] Player 1 can always choose the minimum $x_{k,i}$, which is no larger than $s / k$, so \[\gamet(k,s,2)\geq \gamet(k -1, s-s/k,1)= \frac{s(k-1)!!}{k!!}. \]
  Therefore, $\gamet(k,s,2)$ has value $\frac{s(k-1)!!}{k!!}$.

  Similarly, for $\gamet(k,s,1)$, if Player 1 chooses $x_{k,1} = s, x_{k,2} = \cdots = x_{k,k} = 0$, the result will be \[\min\{s, \gamet(k -1, s,2)\} = \min\left\{s, \frac{s(k-2)!!}{(k-1)!!}\right\} = \frac{s(k-2)!!}{(k-1)!!}.\]
  Also since Player 2 can always choose the minimum $x_{k,i}$, where $0\leq x_{k,i}\leq s/k$, we have that
  \begin{align*}
  \gamet(k,s,1)~&\leq \max\{\gamet(k -1, s,2), s/k\}\\
  ~&= \max\left\{\frac{s(k-2)!!}{(k-1)!!}, s/k\right\} = \frac{s(k-2)!!}{(k-1)!!}.
  \end{align*}
  Therefore, $\gamet(k,s,1)$ has value $\frac{s(k-2)!!}{(k-1)!!}$.

  To see the asymptotic equivalence given in Equation \eqref{eq:B}, observe that
  \[
  \min\left\{\frac{n!!}{(n-1)!!},\frac{(n-1)!!}{(n-2)!!}\right\}=\frac{(n'+1)!!}{n'!!} = \frac{(n'+1)\binom{n'}{\frac{n'}{2}}}{2^{n'}}
  \]
  for $n'=2\cdot \lfloor \tfrac {n-1} 2 \rfloor$.  Standard bounds on the central binomial coefficient also suffice to satisfy the (nonasymptotic) bound on this expression claimed by Theorem \ref{t.target}.
\end{proof}

\subsection{The geometric setting}
\label{s.geometric}
Recall that in the geometric setting, the state is modeled as a subset $X\subseteq \R^2$ topologically equivalent to an open disc, and the measure of a subset $C$ is now $\int_C \phi$. Let $S_T$ be the target subset.

Then, formally, the game can be expressed as the following recursive procedure, in which $T$ is allowed in general to be a finite union of topological open discs, and which for $k=n$, $T=X$ returns the maximum measure of the target over all single districts):

\medskip

\newcommand{\gameT}{\mathrm{\textsc{Game}}_{3}}
\begin{algorithmic}[1]
    \Procedure{Game$_3$}{$k,T,A$}\Comment{Player $A$ divides first}
    \State Player $A$ chooses $k$ districts: disjoint topological open discs of equal measure $d_1,\ldots,d_n$, such that the closure of their union is $T$
    \State Player $B$ chooses an integer $i\in [n]$, for $\{A,B\}=\{1,2\}$
    \State \textbf{return} $\max\left\{\gameT\left(k -1, \text{cl}\left(\bigcup_{j\in [n]/\{i\}} d_j\right),B\right), r(d_i)\right\}$
   \EndProcedure
\end{algorithmic}

\medskip

\noindent Here cl$(\cdot)$ denotes the closure of a set, and $r(d_i)$ denotes the measure of $d_i \cap S_T$.

Since $\gameT$ is much more complicated than $\gamet$, it is difficult to characterize its exact output. Our goal, therefore, is to bound $\gameT(n, X,c)$, that is, we need to find a good enough strategy for Player 2 such that for any strategy of Player 1, the measure of the target subset will be small in every district. Specifically, we will show that Player 2 can ensure that every district contains at most $\frac{2}{\sqrt{n}}$ of the target.
But before we introduce this strategy, we need to establish several lemmas.

\begin{lemma}\label{lemma:divide-district}
If a set $C$ in the plane is a topological open disc of measure $s$, and contains a measurable subset $A$ of measure $a$, then for any positive integer $k$, $C$ can be divided into two topological open discs $D$ and $C'$ ($D$, $C'$ are disjoint and the closure of their union is the closure of $C$) such that the measure of $D$ is $s/k$ and the measure of $D \cap A$ is $a/k$.
\end{lemma}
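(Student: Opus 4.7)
The plan is to use a rotating family of pie slices together with the intermediate value theorem. First, I would fix a homeomorphism $h:\overline{D^2}\to \overline{C}$ from the closed unit disc, and pull back the measure on $C$ to a finite non-atomic Borel measure $\tilde\mu$ on $D^2$ of total mass $s$, together with $\tilde A = h^{-1}(A)$ of $\tilde\mu$-mass $a$. Because $h$ carries topological open discs to topological open discs, it suffices to find a division of $D^2$ into two topological open discs $\tilde D$ and $D^2\setminus \overline{\tilde D}$ with $\tilde\mu(\tilde D) = s/k$ and $\tilde\mu(\tilde D\cap \tilde A) = a/k$.

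For each $\theta\in S^1$, let $D_\theta\subset D^2$ be the open pie slice based at the origin, subtending from angle $\theta$ to angle $\theta+\Delta(\theta)$, where $\Delta(\theta)\in(0,2\pi)$ is the smallest angular width for which the slice has $\tilde\mu$-mass exactly $s/k$. Since $\tilde\mu$ is non-atomic of total mass $s$, the pie-slice mass as a function of angular extent is continuous and rises from $0$ to $s$, so such $\Delta(\theta)$ exists. Each $D_\theta$ and its complement $D^2\setminus\overline{D_\theta}$ is a topological open disc.

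Iterating, set $\theta_0=0$ and $\theta_{i+1}=\theta_i+\Delta(\theta_i)$. The slices $D_{\theta_0},\ldots,D_{\theta_{k-1}}$ are pairwise disjoint by construction and each has $\tilde\mu$-mass $s/k$, so their total mass is $s=\tilde\mu(D^2)$ and they partition $D^2$ up to a measure-zero set. Consequently, $\sum_{i=0}^{k-1}\tilde\mu(D_{\theta_i}\cap\tilde A) = a$, whence some $D_{\theta_i}$ has $\tilde A$-mass at most $a/k$ and another has $\tilde A$-mass at least $a/k$. The function $g(\theta):=\tilde\mu(D_\theta\cap\tilde A)$ is continuous on $S^1$, so by the intermediate value theorem there exists $\theta^\ast$ with $g(\theta^\ast)=a/k$; then $\tilde D := D_{\theta^\ast}$, pushed forward through $h$, yields the desired $D\subset C$, and we set $C':= C\setminus\overline{D}$.

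The main obstacle I anticipate is the regularity of $\Delta(\theta)$: when the pulled-back measure $\tilde\mu$ concentrates on thin radial sets, the angular width achieving mass $s/k$ may fail to be uniquely determined and $\Delta$ can jump, breaking continuity of $g$. To handle this cleanly, I would precede the argument by composing $h$ with a measure-preserving homeomorphism of $D^2$ (available by a standard rearrangement construction) that turns $\tilde\mu$ into a measure with a strictly positive density—e.g., a constant multiple of Lebesgue measure—after which pie-slice mass is strictly increasing in angular extent and $\Delta(\theta)$ is continuous. All other steps are then routine applications of IVT.
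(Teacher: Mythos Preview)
Your approach is exactly the paper's: reduce to the round disc, take circular sectors of mass $s/k$, tile by $k$ of them to find slices with $A$-mass above and below $a/k$, and apply the intermediate value theorem; the paper justifies continuity of $\theta\mapsto \mu(D(\theta)\cap A)$ by invoking the boundedness of the density $\phi$.

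One remark on your anticipated obstacle. The Oxtoby--Ulam style rearrangement you propose requires the source measure to be positive on every nonempty open set, which fails whenever $\phi$ vanishes on an open region, so that fix is not available in general. It is also unnecessary. Whenever the width $\Delta(\theta)$ achieving mass $s/k$ is non-unique, the ambiguity sits in a sector $E$ with $\tilde\mu(E)=0$, and then $\tilde\mu(\tilde A\cap E)\le\tilde\mu(E)=0$ as well, so $g$ does not jump there. What you genuinely need is that individual radii have $\tilde\mu$-mass zero (so that sector mass varies continuously in the endpoints); rather than an arbitrary homeomorphism, take $h$ to be a Riemann map, so that the pullback $\tilde\mu$ remains absolutely continuous with respect to Lebesgue measure and radii are automatically null.
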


\begin{proof}
Without loss of generality we assume $C$ is an open disc centered at the origin. For any $\theta\in [0,2\pi]$, let $D(\theta)$ be the (open, say) circular sector of $C$ of measure $s/k$ starting from the angle $\theta$, let $f(\theta)$ be the measure of $D(\theta)\cap A$.  We can choose $\theta_1,\dots,\theta_k$ so that $D(\theta_1), \ldots, D(\theta_k)$ partition $C$ (up to boundaries) and among these circular sectors, we can find two circular sectors $D(\theta_i)$ and $D(\theta_j)$, such that $f(\theta_i) \leq a / k$ and $f(\theta_j) \geq a/ k$. Since $\phi$ is bounded, $f$ is a continuous function, and we can find an angle $\theta'$ between $\theta_i$ and $\theta_j$ such that $f(\theta') = a / k$. Thus we take $D=D(\theta')$ and $C'=C\setminus D$.
\end{proof}
By induction, Lemma \ref{lemma:divide-district} gives the following:
\begin{lemma}\label{lemma:divide-districts}
  If a set $C$ in the plane is a topological open disc of measure $s$, and contains a measurable subset $A$ of measure $a$, then for any positive integer $k$, $C$ can be divided into $k$ topological open discs $D_1, D_2, \ldots, D_k$ ($D_1, \ldots, D_k$ are disjoint and the closure of their union is the closure of $C$) such that for any $i \in [k]$, the measure of $D_i$ is $s/k$ and the measure of $D_i \cap A$ is $a/k$.
\end{lemma}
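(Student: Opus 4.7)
The plan is to prove Lemma \ref{lemma:divide-districts} by straightforward induction on $k$, using Lemma \ref{lemma:divide-district} as the inductive tool. This is essentially the sort of ``peel off one piece at a time'' argument that the statement hints at.

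The base case $k=1$ is trivial: take $D_1 = C$, whose measure is $s$ and whose intersection with $A$ has measure $a$. For the inductive step, suppose the lemma holds for $k-1$. Apply Lemma \ref{lemma:divide-district} to $C$ (with subset $A$ and the integer parameter $k$) to obtain a topological open disc $D_1 \subset C$ of measure $s/k$ such that $D_1 \cap A$ has measure $a/k$, together with a complementary topological open disc $C'$ whose closure together with $\bar{D_1}$ covers $\bar{C}$. Then $C'$ has measure $s - s/k = s(k-1)/k$ and $C' \cap A$ has measure $a - a/k = a(k-1)/k$.

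Now apply the inductive hypothesis to the topological open disc $C'$, with subset $A' := C' \cap A$ (of measure $a(k-1)/k$) and parameter $k-1$. This yields disjoint topological open discs $D_2, \ldots, D_k$ whose closures cover $\overline{C'}$, each of measure
\[
\frac{s(k-1)/k}{k-1} = \frac{s}{k},
\]
and each with $D_i \cap A' = D_i \cap A$ of measure
\[
\frac{a(k-1)/k}{k-1} = \frac{a}{k}.
\]
Together with $D_1$, these give the desired partition of $C$ into $k$ topological open discs with the required measure and target-measure properties.

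The only potential obstacle is ensuring that the remainder $C'$ is itself a topological open disc, so that the inductive hypothesis applies. But this is precisely what Lemma \ref{lemma:divide-district} guarantees: in its proof, $D_1$ is constructed as (the image under a homeomorphism of) a circular sector of a disc, and removing a closed circular sector from a closed disc leaves a set whose interior is again a topological open disc. So the induction goes through without additional work.
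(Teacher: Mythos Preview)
Your proof is correct and takes essentially the same approach as the paper, which simply states that the lemma follows from Lemma~\ref{lemma:divide-district} by induction. Your write-up is more detailed than the paper's one-line justification, but the underlying argument is identical.
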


As play progresses from the initial configuration on $X$, players may choose districts that divide the remainder of the state into disconnected regions.  The following graph-theoretic lemma will allow us to control the effects of such strategies.

\begin{lemma}\label{lemma:split-graph}
Let $G=(V,E)$ be a connected graph with $n>1$ nodes. Each node $v\in V$ is labeled by a positive number $r(v)$. For any subgraph $C$, let $\bar r(C)$ be the average of $r$ over nodes in $C$. Moreover, let $\mathcal{C}_v$ be the set of connected components of the induced subgraph on $V\setminus \{v\}$. Then for any $c \geq 1$, there always exists a node $v$ such that $r(v) \leq \min\{c, n/2\}\bar r(G)$, and for any component $C \in \mathcal{C}_v$,
\begin{itemize}
  \item $c-1< |C| < n - c$ or $|C| = n - 1$,
  \item $\bar r(C) \leq \frac{|C|+1}{|C|}\bar r(G)$.
\end{itemize}
\end{lemma}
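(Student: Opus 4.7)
My plan is to handle two cases depending on whether the witness $v$ can be chosen as a non-cut-vertex of $G$.

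If some non-cut-vertex $v$ satisfies $r(v) \leq \min\{c, n/2\}\bar r(G)$, take this $v$. Then $G\setminus v$ is connected, so $\mathcal{C}_v = \{C\}$ with $|C| = n-1$; the size condition is satisfied via the ``$|C|=n-1$'' clause, and the average condition $\bar r(C) \leq \frac{n}{n-1}\bar r(G)$ unpacks to $r(v) \geq 0$, which is immediate. Since every connected graph on $n > 1$ vertices has at least two non-cut-vertices (e.g.\ the endpoints of a longest path), averaging gives that the minimum $r$-value over non-cut-vertices is at most $\frac{n}{2}\bar r(G)$. This closes the case $c \geq n/2$ and reduces the remaining work to $c < n/2$, where the threshold is just $c\bar r(G)$.

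For $c < n/2$, suppose the easy case fails, so every non-cut-vertex has $r > c\bar r(G)$. Summing over the non-cut-vertices (of which there are at least $2$) and comparing with $\sum_u r(u) = n\bar r(G)$ shows the number of non-cut-vertices is strictly less than $n/c$, and a short averaging calculation using $c \geq 1$ then shows the average $r$-value over cut vertices is at most $c\bar r(G)$. So cut vertices with small enough $r$ exist, and the task reduces to producing one whose removal in $G$ yields balanced components meeting both the size window $[c, n-c-1]$ and the per-component average bound.

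I plan to produce such a cut vertex using a DFS spanning tree $T$ of $G$ together with an adaptive descent. Rooting $T$ and writing $s(u) = |T_u|$, I descend from the root to a vertex $v$ with $s(v)$ in the window $[c, n-c]$, a window made non-empty by $c < n/2$. The DFS property that all non-tree edges are back-edges guarantees that every component of $G\setminus v$ is either the ``above'' piece $V \setminus V(T_v)$ (possibly merged with child subtrees of $v$ via back-edges from inside $T_v$), or a union of child subtrees merged internally; calibrating the descent lets me force the resulting component sizes into $[c, n-c-1]$. Finally, I will select $v$ to minimize $r(v)$ among these structurally valid candidates, so that any component $C$ violating the average bound would itself contain a candidate with strictly smaller $r$-value, contradicting the choice of $v$. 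The main obstacle is precisely this simultaneous balancing --- coordinating the size window, the per-component average bound, and the smallness of $r(v)$ --- which will be handled by the interplay between the DFS tree structure and the extremal selection.
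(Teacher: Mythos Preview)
Your first two reductions are sound: handling the non-cut-vertex case, and for $c<n/2$ using the hypothesis that every non-cut-vertex has $r>c\bar r(G)$ to deduce that some cut vertex has $r\le c\bar r(G)$. This parallels the paper's setup, with non-cut-vertices of $G$ playing the role the paper assigns to leaves of a spanning tree.

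The gap is in your final paragraph, in two places. First, controlling $s(v)=|T_v|$ via a DFS descent does not control the component sizes of $G\setminus v$: a child subtree of $v$ of size $1$ with no back-edge jumping over $v$ becomes a component of size $1$, which violates the window whenever $c\ge 2$, regardless of where $s(v)$ lands. Nothing in ``calibrating the descent'' prevents this. Second, your extremal step points the wrong way: if a component $C$ violates the average bound then $\bar r(C)$ is \emph{large}, so $C$ has no reason to contain a vertex of small $r$; it is the complementary components that have small average, and moving $v$ there gives you no handle on structural validity. So the minimality of $r(v)$ among ``structurally valid candidates'' does not yield a contradiction.

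The paper sidesteps both issues with one device. It walks along a spanning tree (the ``find-edge'' lemma, Lemma~\ref{lemma:find-edge}) to locate a vertex $v$ such that for \emph{every} union $C_S$ of tree-components of $T\setminus v$ one has $\bar r(\{v\}\cup C_S)\le \bar r(G)$. Taking $S=\emptyset$ yields $r(v)\le\bar r(G)\le c\bar r(G)$; the general case gives the per-component average bound directly, since each $C\in\mathcal C_v$ is some $C_S$; and because every $\{v\}\cup C_S$ contains a leaf of $T$---hence, by your own hypothesis, a vertex with $r>c\bar r(G)$---the same inequality forces $|C_S|+1>c$, from which both size bounds follow. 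The moral is that the average bound should be secured first, and the size window then drops out of it together with the leaf/non-cut-vertex hypothesis, rather than the other way around.
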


\begin{figure}[t]
\begin{center}

\subfigure[$\mathcal C_v = \{C_1,C_2,\ldots,C_m\}$.]{
\label{fig:split-graph}
\begin{tikzpicture}
\node[circle, draw] (v) at (0,0)  {$v$};
\foreach \x in {-1.75, -0.5, 1.75} {
  \draw (\x, -1) -- (v);
  \draw (\x, -1) -- (\x - 0.5, - 2.5);
  \draw (\x, -1) -- (\x + 0.5, - 2.5);
  \draw (\x - 0.5, -2.5) -- (\x + 0.5, - 2.5);
}
\node at (-1.75, -2) {$C_1$};
\node at (-0.5, -2) {$C_2$};
\node at (0.625, -2) {$\cdots$};
\node at (1.75, -2) {$C_m$};
\draw[rounded corners=10pt, dashed] (-2.5,-2.75) rectangle  (2.5, - 0.75);
\node at (-2.25, -0.5) {$\mathcal{C}_v$};
\end{tikzpicture}
}
\subfigure[$\mathcal C_{v_0} = \{C_1, C_2\}$; $v:\alpha$ means $r(v)=\alpha$.]{
\label{fig:split-graph-example}
\begin{tikzpicture}
\foreach \x/\y/\i/\r in {2.5/2/0/0, 0/1/1/0, 2.25/1/2/0, 5.25/1/3/0, 0/0/4/1, 1.5/0/5/0, 3/0/6/1, 4.5/0/7/1, 6/0/8/1, 1.5/-1/9/1}
  \node[draw] (\i) at (\x, \y) {$v_\i:\r$};
\foreach \i/\j in {0/1, 0/2, 0/3, 1/4, 2/5, 2/6, 3/7, 3/8, 5/9}
  \draw (\i) -- (\j);

\draw (4) -- (5);

\draw[rounded corners=10pt, dashed] (-0.67,-1.5) rectangle  (3.67, 1.5);
\node at (0.25, 1.75) {$\bar r(C_1) = 1/2$};
\draw[rounded corners=10pt, dashed] (3.83,-0.5) rectangle  (6.67, 1.5);
\node at (5.5, 1.75) {$\bar r(C_2) = 2/3$};
\end{tikzpicture}
}
\end{center}
\caption{Illustration of Lemma~\ref{lemma:split-graph}. (a) shows the structure of $\mathcal{C}_v$, and (b) gives an example of $G$, with $\bar r(G) = 1/2$. For $c=1$, $v_0,v_1,v_2,v_3,v_5$ all satisfy the conditions of the lemma. Taking $v_0$ as an example, $\mathcal C_{v_0} = \{C_1, C_2\}$, $|C_1|=6$, $|C_2|=3$,  $\bar r(C_1) = 1/2$ and $\bar r(C_2) = 2/3= (4/3)\bar r(G)$.}
\label{fig:example}
\end{figure}
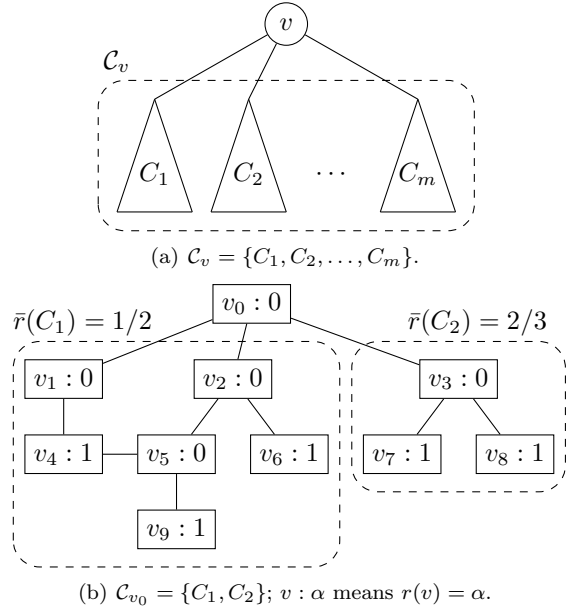

Figure \ref{fig:example} provides an illustrated example of the statement of Lemma \ref{lemma:split-graph} with 10 nodes. To prove the lemma, we will use the following simple observation:

\begin{lemma}\label{lemma:find-edge}
 Let $G=(V,E)$ be a connected graph, and let $T$ be a spanning tree of $G$. For an edge $e = (u,v)$, let the components of the induced subgraph of $T$ on $V\setminus \{v\}$  be $C_1(e), C_2(e),\ldots,C_{m(e)}(e)$. Without loss of generality, assume $u\in C_1(e)$, and let $C_0(e)$ denote  the subtree of $T$ spanned by $\{v\}\cup C_2(e)\cup\dots\cup C_m(e)$. Then there exists an edge $e$ such that $\bar r(C_0(e)) \leq \bar r(G)$ and for all $i=2,\ldots,m(e)$, $\bar r(C_i(e)) > \bar r(G)$.
\end{lemma}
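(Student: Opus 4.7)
The plan is to reformulate the question in terms of ordered edges of $T$ and then identify the desired edge by a minimality argument. Since the endpoint $u$ of $e$ is distinguished as the one lying in $C_1(e)$, the edge $e$ really plays the role of an ordered pair $(u,v)$. Observe that $C_0(u,v)$ coincides with the connected component of $T$ minus the edge $\{u,v\}$ that contains $v$; in particular $C_0(u,v)$ and $C_0(v,u)$ partition $V$, and each tree edge corresponds to two ordered edges.

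First I would verify that the set $\mathcal{S}$ of ordered edges $(u,v)$ satisfying $\bar r(C_0(u,v))\leq \bar r(G)$ is nonempty. For any tree edge $\{x,y\}$, the two components $A\ni x$, $B\ni y$ of $T$ after removing $\{x,y\}$ satisfy $|A|\bar r(A)+|B|\bar r(B)=n\,\bar r(G)$, so at least one orientation of the edge lies in $\mathcal{S}$. I would then select $e=(u,v)\in\mathcal{S}$ so as to minimize $|C_0(e)|$.

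For this choice, the first conclusion of the lemma holds by construction. For the second conclusion, fix $i\in\{2,\dots,m(e)\}$ and let $w$ be the unique neighbor of $v$ in $T$ that lies in $C_i(e)$; removing the edge $\{v,w\}$ from $T$ splits it into the component containing $v$ and the component consisting exactly of the vertices of $C_i(e)$, so $C_0(v,w)=C_i(e)$. Because $C_0(e)=\{v\}\cup C_2(e)\cup\cdots\cup C_{m(e)}(e)$ strictly contains $C_i(e)$, we have $|C_0(v,w)|<|C_0(e)|$. If $\bar r(C_i(e))\leq \bar r(G)$ were to hold, then $(v,w)$ would lie in $\mathcal{S}$ and violate minimality; hence $\bar r(C_i(e))>\bar r(G)$.

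The main subtlety I anticipate is securing the \emph{strict} inequality in the second condition. This is precisely what the tie-breaking by $|C_0(e)|$ buys: simply picking a minimizer of $\bar r(C_0(e))$ would not rule out a proper sub-component realizing the same average, whereas every $C_i(e)$ is automatically properly smaller than $C_0(e)$, so minimality on size forces the strict inequality for free.
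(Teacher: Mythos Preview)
Your argument is correct. The key observation that $C_0(u,v)$ is precisely the component of $T$ minus the tree edge $\{u,v\}$ containing $v$, together with the fact that $C_i(e)=C_0(v,w)$ for the neighbor $w$ of $v$ in $C_i(e)$, makes the minimality argument go through cleanly; in particular, the strict containment $v\in C_0(e)\setminus C_i(e)$ guarantees $|C_0(v,w)|<|C_0(e)|$ and hence the strict inequality $\bar r(C_i(e))>\bar r(G)$.

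Your route differs from the paper's. The paper finds $e$ by an explicit iterative procedure: start at an arbitrary oriented edge, flip it if necessary so that $\bar r(C_0(e))\leq\bar r(G)$, and then repeatedly step from $(u',v')$ to $(v',w)$ with $w$ in some $C_i(e')$ having $\bar r(C_i(e'))\leq\bar r(G)$; termination is argued because no oriented edge is revisited. You instead invoke the extremal principle directly, selecting a global minimizer of $|C_0(e)|$ over the set $\mathcal{S}$. The two arguments exploit the same structural fact (moving into a violating $C_i$ strictly shrinks $C_0$), but your version is shorter and makes the role of size more transparent, while the paper's version is constructive and yields an actual algorithm to locate $e$---which matters in the paper's setting, since Player~2 must compute this vertex as part of the strategy in Section~\ref{s.geometric}.
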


The strong inequality at the end of the lemma's statement may seem confusing, as it could be the case that, say, $r(v)=1$ for all $v\in V$. But, in that case, we can choose $v$ to be a leaf, and then $C_1(e)$ is the rest of the tree, so $C_0(e)$ is just $v$ itself.

\begin{proof}[Proof of Lemma~\ref{lemma:find-edge}]
We find the edge $(u,v)$ with the following procedure. First, choose an arbitrary edge $e= (u,v)$. If $\bar r(C_0(e)) \leq \bar r(G)$, continue with $e$; otherwise, continue with $(v, u)$. (The desired inequality must hold for one of the two choices, as they induce the same partition of the vertices, with opposite choices of $C_0(e)$.) Then, while the current edge $e' = (u',v')$ has $2\leq i\leq m(e')$ such that $\bar r(C_i(e')) \leq \bar r(G)$, continue with $e''=(v', w)$ for $w\in C_i(e')$; otherwise, terminate with $(u,v)$. Notice that in the former case, $\bar{r}(C_0(e''))=\bar{r}(C_i(e'))\leq \bar{r}(G)$. See Figure \ref{fig:find-edge-example} for an illustration of the procedure.

This procedure will terminate because each edge can only be considered once. Moreover, the procedure ensures that for the current edge $e$, $r(C_0(e)) \leq \bar r(G)$. Finally, the termination condition ensures that for all $i=2,\ldots,m(e)$, $\bar r(C_i(e)) > \bar r(G)$.
\begin{figure}[p]
\begin{center}
\subfigure[A spanning tree of the graph shown in Figure \ref{fig:split-graph-example}.]{
\begin{tikzpicture}
\foreach \x/\y/\i/\r in {1.5/2/0/0, 0/1/1/0, 1.5/1/2/0, 3/1/3/0, 0/0/4/1, 1/0/5/0, 2/0/6/1, 2.75/0/7/1, 3.5/0/8/1, 1/-1/9/1}
  \node[circle, draw, inner sep=1pt] (\i) at (\x, \y) {$v_\i$};
\foreach \i/\j in {0/1, 0/2, 0/3, 1/4, 2/5, 2/6, 3/7, 3/8, 5/9}
  \draw (\i) -- (\j);

\draw[dashed] (4) -- (5);

\end{tikzpicture}
}
\hspace{0.8in}
\subfigure[First step: choose $e=(v_5,v_9)$.]{
\begin{tikzpicture}
\foreach \x/\y/\i/\r in {1.5/2/0/0, 0/1/1/0, 1.5/1/2/0, 3/1/3/0, 0/0/4/1, 1/0/5/0, 2/0/6/1, 2.75/0/7/1, 3.5/0/8/1, 1/-1/9/1}
  \node[circle, draw, inner sep=1pt] (\i) at (\x, \y) {$v_\i$};
\foreach \i/\j in {0/1, 0/2, 0/3, 1/4, 2/5, 2/6, 3/7, 3/8, 5/9}
  \draw (\i) -- (\j);

\draw[rounded corners=8pt, dashed] (0.6,-1.4) rectangle  (1.4, -0.6);
\node at (2.5, -1.2) {$\bar r(C_0(e)) = 1$};
\draw[rounded corners=10pt, dashed] (-0.4,-0.4) rectangle  (3.9, 2.4);
\node at (2.5, 2.6) {$\bar r(C_1(e)) = 4/9$};
\end{tikzpicture}
}
\vspace{0.5cm}

\subfigure[Second step: switch to $e=(v_9,v_5)$.]{
\begin{tikzpicture}
\foreach \x/\y/\i/\r in {1.5/2/0/0, 0/1/1/0, 1.5/1/2/0, 3/1/3/0, 0/0/4/1, 1/0/5/0, 2/0/6/1, 2.75/0/7/1, 3.5/0/8/1, 1/-1/9/1}
  \node[circle, draw, inner sep=1pt] (\i) at (\x, \y) {$v_\i$};
\foreach \i/\j in {0/1, 0/2, 0/3, 1/4, 2/5, 2/6, 3/7, 3/8, 5/9}
  \draw (\i) -- (\j);

\draw[rounded corners=8pt, dashed] (0.6,-1.4) rectangle  (1.4, -0.6);
\node at (2.5, -1.2) {$\bar r(C_1(e)) = 1$};
\draw[rounded corners=8pt, dashed] (-0.4,-0.4) -- (0.3, -0.4) -- (0.7, 0.7) -- (1.3, 0.7)  -- (1.7, -0.4)-- (3.9, - 0.4) -- (3.9, 1) --node[sloped, above]{$\bar r(C_2(e)) = 1/2$} (1.8, 2.4) -- (1.2, 2.4) -- (-0.4, 1.4) -- cycle;
%\node at (1.5, 2.6) {$\bar r(C_2(e)) = 1/2$};
\end{tikzpicture}
}
\hspace{0.8in}
\subfigure[Third step: switch to $e=(v_5,v_2)$.]{
\begin{tikzpicture}
\foreach \x/\y/\i/\r in {1.5/2/0/0, 0/1/1/0, 1.5/1/2/0, 3/1/3/0, 0/0/4/1, 1/0/5/0, 2/0/6/1, 2.75/0/7/1, 3.5/0/8/1, 1/-1/9/1}
  \node[circle, draw, inner sep=1pt] (\i) at (\x, \y) {$v_\i$};
\foreach \i/\j in {0/1, 0/2, 0/3, 1/4, 2/5, 2/6, 3/7, 3/8, 5/9}
  \draw (\i) -- (\j);

\draw[rounded corners=8pt, dashed] (0.6,-1.4) rectangle  (1.4, 0.4);
\node at (2.6, -1.2) {$\bar r(C_1(e)) = 1/2$};
\draw[rounded corners=8pt, dashed] (-0.4,-0.4) -- (0.4, -0.4) -- (0.4, 0.8) -- (1.5, 1.6)  -- (2.4, 0.8) -- (2.4, -0.4)-- (3.9, - 0.4) -- (3.9, 1) --node[sloped, above]{$\bar r(C_2(e)) = 1/2$} (1.8, 2.4) -- (1.2, 2.4) -- (-0.4, 1.4) -- cycle;
%\node at (1.5, 2.6) {$\bar r(C_2(e)) = 1/2$};
\draw[rounded corners=8pt, dashed] (1.67,-0.4) rectangle  (2.33, 0.4);
\node at (2.7, -0.7) {$\bar r(C_3(e)) = 1$};
\end{tikzpicture}
}
\vspace{.5cm}

\subfigure[Fourth step: switch to $e=(v_2,v_0)$.]{
\begin{tikzpicture}
\foreach \x/\y/\i/\r in {1.5/2/0/0, 0/1/1/0, 1.5/1/2/0, 3/1/3/0, 0/0/4/1, 1/0/5/0, 2/0/6/1, 2.75/0/7/1, 3.5/0/8/1, 1/-1/9/1}
  \node[circle, draw, inner sep=1pt] (\i) at (\x, \y) {$v_\i$};
\foreach \i/\j in {0/1, 0/2, 0/3, 1/4, 2/5, 2/6, 3/7, 3/8, 5/9}
  \draw (\i) -- (\j);

\draw[rounded corners=8pt, dashed] (0.6,-1.4) rectangle  (2.3, 1.4);
\node at (1.5, -1.7) {$\bar r(C_1(e)) = 1/2$};
\draw[rounded corners=8pt, dashed] (-0.4,-0.4) --node[sloped, above]{$\bar r(C_2(e)) = 1/2$} (-0.4, 1.4) -- (0.4, 1.4) -- (0.4,-0.4) --cycle;
%\node at (-0.7, -0.7) {$\bar r(C_2(e)) = 1/2$};
\draw[rounded corners=8pt, dashed] (2.45, -0.4)-- (2.45, 1.4)  --(3.9, 1.4) -- (3.9, -0.4) --cycle;
\node at (3.7, -0.7) {$\bar r(C_3(e)) = 2/3$};

\end{tikzpicture}
}
\hspace{0.4in}
\subfigure[Fifth step: switch to $e=(v_0,v_1)$.]{
\begin{tikzpicture}
\foreach \x/\y/\i/\r in {1.5/2/0/0, 0/1/1/0, 1.5/1/2/0, 3/1/3/0, 0/0/4/1, 1/0/5/0, 2/0/6/1, 2.75/0/7/1, 3.5/0/8/1, 1/-1/9/1}
  \node[circle, draw, inner sep=1pt] (\i) at (\x, \y) {$v_\i$};
\foreach \i/\j in {0/1, 0/2, 0/3, 1/4, 2/5, 2/6, 3/7, 3/8, 5/9}
  \draw (\i) -- (\j);

\draw[rounded corners=8pt, dashed] (0.6,-1.4) rectangle  (3.9, 2.4);
\node at (2, -1.7) {$\bar r(C_1(e)) = 1/2$};
\draw[rounded corners=8pt, dashed]  (-0.4,-0.4) --node[sloped, above]{$\bar r(C_2(e)) = 1$} (-0.4, 0.4) -- (0.4, 0.4) -- (0.4,-0.4) --cycle;
%\node at (-0.7, -0.7) {$\bar r(C_2(e)) = 1$};
\end{tikzpicture}
}
\caption{The procedure described in Lemma~\ref{lemma:find-edge}, applied to the graph of Figure~\ref{fig:split-graph-example}.}
\label{fig:find-edge-example}
\end{center}
\end{figure}
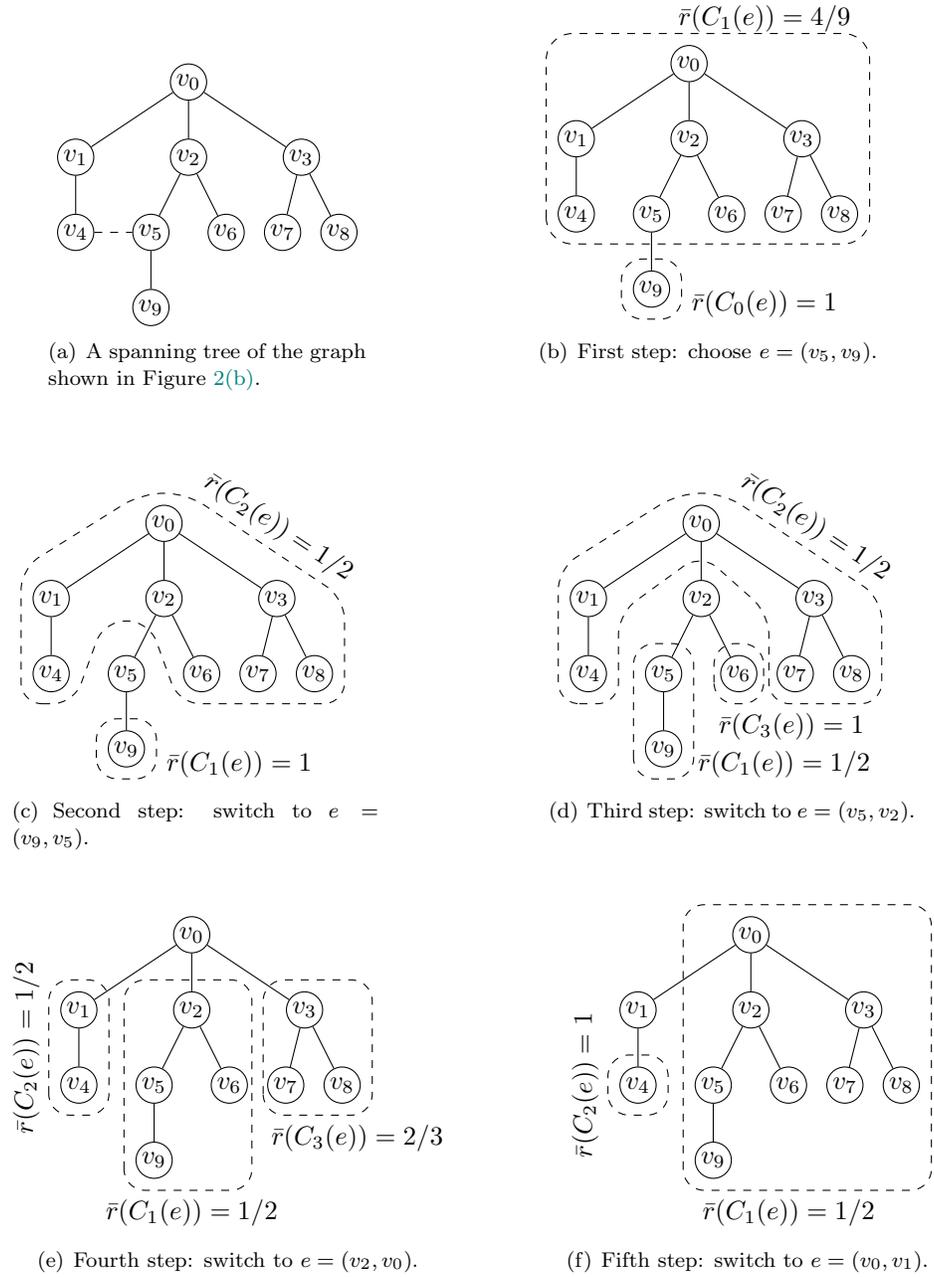
\end{proof}

\begin{proof}[Proof of Lemma \ref{lemma:split-graph}]

  First, if there is a leaf $v$ of $T$ with \[r(v) \leq \min\{c, n/2\}\bar r(G),\] then $v$ trivially satisfies the conditions. Moreover, if $c \geq n / 2$, there must be at least one leaf satisfying the foregoing inequality, because there are at least two leaves in any tree. Therefore, in what follows, we assume that $c < n / 2$, and every leaf $v$ of $T$ satisfies
\begin{equation}
\label{eq:rv}
r(v) > c\bar r(G).
\end{equation}
Now let the edge whose existence is guaranteed by Lemma \ref{lemma:find-edge} be $e = (u,v)$; we will show that the node $v$ satisfies the conditions of Lemma \ref{lemma:split-graph}.

  For $S\subseteq [m(e)]$, let $C_S(e) = \bigcup_{i\in S} C_i(e)$. We claim that for all $S \subseteq [m(e)]$, we have %\setminus \{1\}$,
  \begin{equation}
    \label{e.rbound}\bar r(C_S(e) \cup \{v\})\leq \bar r(G).
  \end{equation}
  To see this, suppose it fails.  If $1\notin S$, then we get the contradiction $\bar r(C_0(e)) > \bar r(G)$, since $C_0(e) = (C_S(e) \cup \{v\})\cup \bigcup_{2\leq j\leq m(e),j\notin S} C_j(e)$ and $\bar r(C_j(e))>\bar r(G)$ for $2\leq j\leq m(e)$ by assumption.  If $1\in S$, then we get the analogous contradiction  $\bar r(G) > \bar r(G)$, since in this case $G = (C_S(e) \cup \{v\})\cup \bigcup_{2\leq j\leq m(e),j\notin S} C_j(e)$.

  Note that taking $S = \emptyset$ gives the Lemma's claim that $r(v) \leq \bar r(G)$. Moreover, since the $C_i$'s are connected subgraphs of of the induced subgraph on $V\setminus \{v\}$ that partition its vertex set, any component $C\in \mathcal{C}_v$ must have the same vertex set as $C_S(e)$ for some $S\subseteq [m(e)]$. Therefore, \eqref{e.rbound} gives that  all components $C\in \mathcal{C}_v$ satisfy $\bar r(C \cup \{v\})\leq \bar r(G)$, or, since $r(v)\geq 0$, that
  \begin{equation*}
    \bar r(C)\leq \frac{|C|+1}{|C|}\bar r(G).
  \end{equation*}
  This is the second requirement the lemma places on components $C\in \mathcal{C}_v$.

  To establish the first, consider a nonempty $S\subseteq [m(e)]$. Observe that $C_S(e) \cup \{v\}$ contains at least one leaf of $T$. Using \eqref{eq:rv}, we have
  \begin{equation}
    \label{e.fromoneleaf}
    \bar r(C_S(e) \cup \{v\}) > \frac{c\bar r(G)}{|C_S(e)| + 1}.
  \end{equation}
 Now \eqref{e.rbound} and \eqref{e.fromoneleaf} imply that
  \begin{equation}
    \label{e.CSbound}
    |C_S(e)|+1 > c.
  \end{equation}
Moreover, if $|C| \neq n - 1$, that is, $ [m(e)]\setminus S \neq \emptyset$, then \eqref{e.CSbound} gives that
  \begin{equation}\label{e.sizerange}
    |C| = n - |C_{[m(e)]\setminus S}(e)|- 1 < n - c,
  \end{equation}
  completing the proof.
\end{proof}

The significant source of complication in \textsc{Game}$_3$ is that the state may be disconnected during play.  In order to illustrate our strategy clearly, we will need to classify the components of the active part of the state into 3 types. At the beginning, if the size/measure of the state is odd, the initial component (the whole state) is Type 1; otherwise, it is Type 2. If a component (of any type) is split into several smaller components, these components are also assigned Type 1 or Type 2 according to the parity of their size.  If a player freezes a district in a component, but keeps the component connected, then the type of the component changes as follows:

\begin{itemize}
  \item Type 1 becomes Type 2;
  \item Type 2 becomes Type 1 if chosen by Player 1, and becomes Type 3 if chosen by Player 2;
  \item Type 3 becomes Type 2.
\end{itemize}
Note that at any point, a component with an even number of districts is Type 2, while a component with an odd number of districts is Type 1 or Type 3.
Our strategy for Player 2 is as follows.

\begin{itemize}
  \item If it is Player 2's turn to cut, for each connected component $C$, there is a topological open disc $C'$ whose closure is $C$. Then by applying Lemma \ref{lemma:divide-districts}, he divides it into districts such that the proportion of the target is the same in every district.
  \item If it is Player 2's turn to freeze, he plays in a component of Type 1 if available, and otherwise in component of Type 2.  (We will prove that one of these types is always available to him.)   To choose the district to freeze, Player 2 regards the presented districting of his selected component as a planar graph, with districts as vertices and edges corresponding to shared boundaries of positive length.  He applies Lemma \ref{lemma:split-graph} to this graph with $c=3$ and freezes the district corresponding to the vertex $v$ given by the lemma.
\end{itemize}

Depending on which player freezes first and the parity of the measure of the original state, a given districting game is either of \emph{odd type}, where Player 2 is always presented with an odd number of districts when it is his turn to freeze, of of \emph{even type}, where he encounters an even number of districts when it is his turn to freeze.
\begin{lemma}\label{lemma-no-3}
  In an odd type instance of $\gameT$, there will never exist component of Type 3 if Player 2 plays as above.
\end{lemma}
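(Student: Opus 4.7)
The plan is to prove the claim by induction over the game, maintaining the invariant that no Type 3 component ever exists at any moment of play. The central tool is a simple parity identity: since Type 1 and Type 3 components contain an odd number of districts while Type 2 components contain an even number, at every moment
\[
(\text{total unfrozen districts}) \equiv \#(\text{Type 1}) + \#(\text{Type 3}) \pmod 2.
\]

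Before running the induction, I would enumerate the ways a Type 3 component can be born. A freeze that disconnects its component assigns the resulting pieces their type purely by parity, so splits never produce a Type 3. A non-splitting freeze follows the transitions listed just before the lemma, from which one reads off that the \emph{only} way a Type 3 component can appear is via Player 2 freezing in a Type 2 component without disconnecting it; in particular, Player 1's moves, and splits by either player, can never create a Type 3. By Player 2's stated strategy, he commits such a move only if no Type 1 component is available when it is his turn to freeze.

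The induction then goes as follows. The base case is immediate: in an odd type instance the state has odd measure, so the initial sole component is Type 1 and no Type 3 component exists. For the inductive step, assume no Type 3 component exists just before some freeze turn of Player 2. Since the instance is odd type, the total number of unfrozen districts at this moment is odd, so by the parity identity and the inductive hypothesis the number of Type 1 components is odd, and in particular at least one Type 1 component exists. Player 2's strategy selects it, and by the case analysis no Type 3 is produced by his freeze. The subsequent redistricting by Player 2 leaves the component structure untouched, Player 1's ensuing freeze cannot create a Type 3 either, and his redistricting again leaves components unchanged. Hence no Type 3 component appears before Player 2's next freeze turn, closing the induction.

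The argument is essentially parity bookkeeping, and I do not expect any serious obstacle. The only mildly delicate point to double-check is that the ``odd type'' condition (defined in terms of what Player 2 sees at his freeze turns) is consistent with typing the initial component by the parity of the whole state's measure. This follows from a direct check: regardless of whether Player 1 or Player 2 cuts first, the total number of unfrozen districts changes parity with each freeze, so ``Player 2 is always presented with an odd number of districts'' is equivalent to the state having odd measure, matching the initial Type 1 classification used above.
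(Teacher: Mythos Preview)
Your proof follows essentially the same approach as the paper's: induct on Player~2's freeze turns, use parity to guarantee a Type~1 component is available whenever it is his turn, and rely on the observation that only Player~2's non-splitting freezes in Type~2 components can produce a Type~3.

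One small correction: the claim in your final paragraph that ``odd type is equivalent to the state having odd measure, regardless of who cuts first'' is not correct. If Player~2 is the first cutter, then Player~1 freezes first with $n$ districts and Player~2's first freeze occurs with $n-1$ districts; odd type then forces $n$ to be even, so the initial component is Type~2, not Type~1. This does not harm your argument, though: the base case ``no Type~3 exists before Player~2's first freeze'' holds trivially regardless of the initial component's type (nothing capable of producing a Type~3 has yet occurred), and your inductive step only needs that the total number of unfrozen districts at Player~2's freeze turns is odd, which is the definition of odd type. The delicate point you flagged is therefore simply unnecessary for the proof.
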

\begin{proof}
This is a simple consequence of the parities and Player 2's preference to play in Type 1 components.   Note that there are initially no Type 3 components and only Player 2's freezes can create such components.  In particular, we prove by induction that immediately before Player 2's $k$th turn to freeze a district, there are no Type 3 components.  Note that since the game is of odd type, this induction hypothesis implies that there is at least one Type 1 component before his $k$th turn to freeze.  Player 2's strategy will thus freeze a district in a Type 1 component, and no Type 3 components will be created on his $k$th turn, meaning that immediately before his $(k+1)$st turn to freeze, no Type 3 components will exist, as desired.
\end{proof}

\begin{lemma}
  At any point, there exists at most one component of Type 3. If it is Player 2's turn to freeze, there exists at least one component of Type 1 or 2.
\end{lemma}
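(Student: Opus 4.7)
The plan is to prove both claims simultaneously by induction on the sequence of freeze operations, maintaining the invariant that at every point during the game at most one Type 3 component exists. The base case is immediate: the initial configuration has a single component (Type 1 or Type 2 depending on the parity of the number of districts) and no Type 3. Cutting steps redistrict within existing components and therefore leave the set of components and their types unchanged, so only the freeze operations need to be analyzed.

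Player 1's freezes never create a Type 3 component: by the transition rules, freezing in a Type 2 without splitting produces a Type 1, freezing in a Type 1 or Type 3 without splitting produces a Type 2, and any split produces only Type 1 or Type 2 components (by the parity convention for newly created components). Hence the number of Type 3 components is non-increasing across Player 1's freeze, preserving the invariant.

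For Player 2's freeze, the key observation is that Type 1 and Type 3 components contain an odd number of districts while Type 2 components contain an even number. Consequently the parity of (number of Type 1) $+$ (number of Type 3) equals the parity of the total number of remaining districts. In an odd-type instance, Lemma \ref{lemma-no-3} rules out Type 3 components, so the number of Type 1 components is odd and hence at least $1$; Player 2 freezes in a Type 1 and creates no Type 3. In an even-type instance, (number of Type 1) $+$ (number of Type 3) is even, and by induction the number of Type 3 components is $0$ or $1$. If it equals $1$, then the number of Type 1 components is odd, hence at least $1$, and Player 2 freezes in a Type 1, preserving the invariant. If it equals $0$, then either some Type 1 component is available (Player 2 freezes there, no new Type 3 is created) or only Type 2 components exist (at least one, since the game is ongoing), in which case Player 2 freezes in a Type 2 and may create at most one new Type 3, keeping the total at most $1$. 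In every subcase the invariant is preserved and a Type 1 or Type 2 component was available for Player 2, establishing the second claim.

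The main obstacle is the parity bookkeeping: one must recognize that the existence of a Type 3 component automatically forces an odd count of Type 1 components (via Lemma \ref{lemma-no-3} in odd-type games, and via the parity identity in even-type games), and this is precisely what guarantees that Player 2's ``prefer Type 1'' strategy never has to create a second Type 3. Once this interplay is isolated, the rest reduces to a short case analysis of type transitions.
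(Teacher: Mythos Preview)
Your proof is correct and follows essentially the same approach as the paper: induction on freeze steps, using Lemma~\ref{lemma-no-3} for odd-type games and the parity identity $(\#\text{Type 1})+(\#\text{Type 3})\equiv(\text{total districts})\pmod 2$ for even-type games to ensure Player~2's preference for Type~1 prevents a second Type~3 from appearing. Your version is organized slightly differently---you prove both claims simultaneously in the induction rather than first deriving the second claim from the first---and you spell out explicitly why Player~1's freezes and splits never create Type~3 components, which the paper leaves implicit.
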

\begin{proof}
  Note that the first claim of the Lemma implies the second: Indeed, if it is Player 2's turn to freeze, and the only component available is of Type 3, then the game is of odd type, and this contradicts Lemma \ref{lemma-no-3}.

  To prove the first claim, we proceed again by induction. Suppose a choice by Player 2 increased the number of Type 3 components; in this case, there were no Type 1 components available, and by Lemma \ref{lemma-no-3}, the game is of even type; this implies the number of Type 3 components available was even. Since it is at most 1 by assumption, there were no Type 3 components available, and thus the number of Type 3 components can only increase to 1, as claimed.
\end{proof}

For each connected component $C$, define $r(C)$ as the ratio of the target in $C$, i.e., \[ r(C)= \frac{\int_{C\cap S_T} \phi}{\int_C \phi}.\] The following lemma establishes an upper bound on $r(C)$ according to its type.

\begin{lemma}\label{lemma-bound-by-type}
  Let $r_0 = r(X)$, where $X$ is the initial state. For any sequence $(a_1, a_1, a_2, \ldots, a_k)$, let $q(a_1,a_2,\ldots, a_k) = \prod_{i=1}^k \frac{a_i}{a_i-1}$. At any step of the game, for any connected component $C$ of measure $s$:
  \begin{itemize}
    \item if $C$ is Type 1, \[r(C)\leq r_0\cdot q(s+2,s+4,\ldots, n'');\]
    \item if $C$ is Type 2, \[r(C)\leq r_0\cdot q(s+1,s+3,\ldots, n'');\]
    \item if $C$ is Type 3, \[r(C)\leq r_0\cdot q(s+1,s+2,s+4,\ldots, n''),\]
  \end{itemize}
  where $n'' = 2\lfloor (n-1)/2\rfloor + 1$.
\end{lemma}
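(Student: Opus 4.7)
The plan is to prove Lemma~\ref{lemma-bound-by-type} by induction on the number of districts frozen so far. For the base case, no freezes have yet occurred and the state consists of the single component $X$ of size $n$ with ratio $r_0$; parity assigns it Type 1 if $n$ is odd and Type 2 otherwise, and in either case the first argument of the defining $q$-product exceeds $n''$, leaving an empty product that evaluates to $1$, so the asserted bound is exactly $r_0$. For the inductive step I fix a round and show that whenever all the claimed bounds hold at the start of the round, they continue to hold after the freeze and the subsequent re-cut. Components untouched by the freeze retain their type, size, and ratio, so their bounds are preserved trivially, and I may focus on the single component $C$ in which a district is frozen.

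The easy subcase is when Player 1 is the freezer. The preceding cut was then Player 2's uniform cut via Lemma~\ref{lemma:divide-districts}, so every district in $C$ has target ratio exactly $r(C)$ and the remainder of $C$ stays connected when any one sector is removed. The new component thus has size $s-1$ and unchanged ratio $r(C)$, and a direct comparison of the $q$-products gives $B_2(s-1) = \tfrac{s}{s-1}B_1(s)$ when $s$ is odd, $B_1(s-1) = B_2(s)$ when $s$ is even, and $B_2(s-1) = \tfrac{s^2}{s^2-1}B_3(s)$ when $s$ is odd. Each of these is at least $r(C)$, so the three transitions Type~$1\to 2$, Type~$2\to 1$, and Type~$3\to 2$ that Player 1 can create all preserve the bound.

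The main subcase is Player 2's freeze. His preference rules restrict this to a Type 1 or Type 2 component $C$, and he applies Lemma~\ref{lemma:split-graph} with $c = 3$ to the adjacency graph of the districts inside $C$. If $C$ stays connected, the remainder has size $s-1$ and $\bar r(C') \leq \tfrac{s}{s-1}r(C)$; telescoping the $q$-products yields the tight identities $\tfrac{s}{s-1}B_1(s) = B_2(s-1)$ and $\tfrac{s}{s-1}B_2(s) = B_3(s-1)$, so the new Type 2 or Type 3 bound is met with equality. If $C$ splits into pieces $C_1,\dots,C_m$, Lemma~\ref{lemma:split-graph} with $c=3$ forces each $s_j := |C_j|$ to satisfy $3 \leq s_j \leq s-4$ and gives $\bar r(C_j) \leq \tfrac{s_j+1}{s_j} r(C)$. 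Each $C_j$ receives Type 1 or Type 2 by the parity of $s_j$, and verifying $\tfrac{s_j+1}{s_j}B_T(s) \leq B_{T'}(s_j)$ in the nontrivial parity combinations (those in which the target type is Type 1) reduces to the elementary $q$-product inequality
\[
\prod_{k=1}^{m}\frac{a+2k}{a+2k-1} \geq \frac{a+1}{a},
\]
while in the remaining parity combinations the required bound holds after a single cancellation of the $\tfrac{s_j+1}{s_j}$ factor, since what remains is a product of terms each greater than $1$.

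The main obstacle is thus the displayed inequality, applied with $a = s_j \geq 3$ and $m \geq 2$; the latter follows from $s_j \leq s - 4$ together with the observation that $s - s_j$ is either even (same parity) and hence $\geq 4$, or odd (opposite parity) and hence $\geq 5$. Since the left-hand side is monotone increasing in $m$, it suffices to check $m = 2$; clearing denominators then reduces the inequality to the polynomial bound $a^2 + a - 3 \geq 0$, trivially true for $a \geq 2$. Combining the connected and split analyses with the trivial untouched-component case yields the inductive step and completes the proof.
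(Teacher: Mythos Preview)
Your overall approach mirrors the paper's: induct on the number of frozen districts and case-split on which player freezes and whether the chosen component remains connected. Your treatment of the Player~2 freeze (both the connected and split subcases) is correct and, in the split subcase, actually more explicit than the paper's, which dispatches several parity combinations with ``one can easily verify.''

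The gap is in your Player~1 freeze case. You assert that because the preceding cut was Player~2's uniform division via Lemma~\ref{lemma:divide-districts}, ``the remainder of $C$ stays connected when any one sector is removed.'' But Lemma~\ref{lemma:divide-districts} is proved by \emph{iterated} application of Lemma~\ref{lemma:divide-district}, with a fresh homeomorphism to a disc at each step; the resulting pieces $D_1,\dots,D_k$ are not sectors of a single common disc, and nothing in the construction guarantees that deleting an arbitrary $D_j$ leaves a connected set. Concretely, at step $j$ the piece $D_j$ could cover the entire shared boundary between $C_{j-1}$ and $D_1\cup\cdots\cup D_{j-1}$, in which case $C_j$ is separated from the earlier pieces once $D_j$ is removed. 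Nor can you silently replace Player~2's strategy with a single pie-slicing, since finding $k$ sectors of one disc that \emph{simultaneously} equalize both measure and target-measure is not what the cited lemma delivers.

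The paper does not rely on this connectivity claim. It treats the case where Player~1's freeze disconnects $C$ explicitly: each resulting component $C'$ still has $r(C')=r(C)$ by uniformity of Player~2's cut, and now $|C'|\le |C|-2$. From there the required type bound is immediate, since the $q$-product defining $B_{T'}(|C'|)$ contains every factor of $B_T(|C|)$ together with at least one extra factor $\ge 1$. Inserting this short case into your argument closes the gap; nothing else in your proof needs to change.
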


\begin{proof}
  We prove the lemma by induction on $s$. At the beginning of the game, it is trivially true. Now suppose a component $C$ is chosen, there are two cases, according to whether the component is split by frozen district.

  First, we consider the case where the component is still connected after the move. After removing a district from $C$, let the new component be $C'$ and suppose it contains $s$ districts. If it is Player 1's turn to freeze, then $r(C') = r(C)$; otherwise, $r(C') \leq \frac{s+1}{s}r(C) = q(s+1)r(C)$. For any possible type change, one can easily verify that $r(C')$ satisfies the inequality; here we only take changes from Type 2 to Type 3 as an example:
  \begin{align*}
   r(C') ~&\leq q(s+1)r(C) \leq q(s+1) \cdot r_0\cdot q(s+2,s+4,\ldots,n'') \\
         ~&=  r_0\cdot q(s+1,s+2,s+4,\ldots,n''),
  \end{align*}
where the second inequality holds by the induction assumption.

  Second, we consider the case where smaller components emerge. Let one of the new components be $C'$ and suppose it contains $s$ districts. If it is Player 1's turn to freeze, then we have $r(C') = r(C)$ and $|C'|\leq |C|-2$. In this case, one can easily verify that $r(C')$ satisfies the inequalities.

If it is Player 2's turn to freeze, then we have $r(C') \leq \frac{s+1}{s} r(C)$, and $C$ contains more than $s+c$ districts by Lemma \ref{lemma:split-graph} (as $s=|C'|<|C|-c$). This case is more complicated, and we have to enumerate types of $C'$ and $C$ -- both can be Type 1 or 2. Let us first take $C'$ of Type 1 and $C$ of Type 2 as an example. Suppose $C$ contains $s+2a+1$ districts, then
  \[r(C) \leq r_0 \cdot q(s+2a+2,s+2a + 4, \ldots, n'').\]
  In order to ensure \[q(s +1) r(C) \leq r_0\cdot q(s + 2, s + 4, \ldots, n''),\]
  we only need to make sure \[q(s+1)\leq q(s+2,s+4,\ldots, s+2a).\] Since $c = 3$ and $c<s<s+2a+1-c$, we have $a \geq 2$ and $s \geq 4$, which satisfies the inequality above.
  Similarly, one can easily verify that when both $C'$ and $C$ are Type 1, $c\geq 2$ is enough; when $C'$ is Type 2, $c\geq 1$ is enough. Since we let $c = 3$ when applying Lemma \ref{lemma:split-graph}, $r(C')$ satisfies the inequalities.
\end{proof}

\iffalse
\begin{theorem}
  If Player 2 uses the strategy we proposed above, for any district $d$ produced by the game, \[r(d)\leq 6\sqrt{e n/2}r_0,\]
  in other words, $d$ contains at most $\frac{6\sqrt{e/2}}{\sqrt{n}}$ of the target.
\end{theorem}
\fi

\begin{proof}[Proof for Theorem \ref{t.target}]
  For any component $C$ of any type, by Lemma \ref{lemma-bound-by-type},
  \[r(C)\leq\left\{\begin{array}{ll} r_0 \cdot \frac{n''!!(s-2)!!}{(n''-1)!!(s-1)!!}, & \text{if } C \text{ is Type 2}, \\
  r_0\cdot\frac{s+1}{s}\cdot \frac{n''!!(s-1)!!}{(n''-1)!!s!!}, & \text{otherwise}. \end{array}\right.\]
  If Player 1 freezes a district $d$ from $C$, $r(d) = r(C)$. If Player 2 freezes a district $d$ from $C$, by Lemma \ref{lemma:split-graph} (with $c = 3$), $r(d)\leq \min\{3, s/2\}r(C)$ if $s > 1$, otherwise $r(d) = r(C)$. Therefore, for any $d$,
  \begin{align*}
    r(d)\leq&~ r_0 \cdot \frac{n''!!}{(n''-1)!!} \cdot \max\left\{2, \max_{k\in \mathbb{N}^+}\left\{ \min\{3, (2k - 1)/2\} \frac{2k}{2k-1} \frac{(2k-2)!!}{(2k-1)!!}\right\} , \right.\\
    &~ \left.\max_{k\in \mathbb{N}^+}\left\{\min\{3, k\} \frac{(2k-2)!!}{(2k-1)!!}\right\}\right\}\\
    =&~ r_0 \cdot \frac{n''!!}{(n''-1)!!} \cdot \max\left\{2, \frac{3 \cdot 6!!}{7!!}, \frac{3 \cdot 4!!}{5!!}, \frac{2 \cdot 2!!}{3!!}\right\} \\
    =&~ 2r_0 \cdot \frac{n''!!}{(n''-1)!!}.
  \end{align*}

  Since \[\frac{n''}{n''-1} \leq \sqrt \frac{n''}{n''-2},\]
  by induction on $n''$, one can easily find
  \[\frac{n''!!}{(n''-1)!!} \leq \sqrt{n''}\leq \sqrt{n}.\]
  That is,
  \[r(d)\leq 2r_0\sqrt{n}.\]

  In other words, $d$ contains at most $\frac{2}{\sqrt{n}}$ of the target.
\end{proof}

\section{Limitations and further questions}
\label{s.realworld}

We have analyzed optimum play of the I-cut-you-freeze protocol in idealized settings; in actual applications to redistricting, real-world constraints would interfere with optimum play. For example, our analysis considers a district with $50.1\%$ loyalty to Player $A$ to be controlled by that player, whereas in a real-world setting, a Player would want a more comfortable loyalty margin to consider a district safe.  Apart from this thresholding issue, two  other simplifications in our model stand out:
\begin{itemize}
\item \textbf{Geometric constraints on districts.} In the United States, Congressional districts are required to be connected, but in many states, are also required to be geometrically ``nice'' in other less-precise ways.  One common term used in state-by-state requirements on districts is that they be ``compact'', which is supposed to limit the extent to which districts have intricate drawn out structure.  Various metrics have been proposed to quantify the ``compactness'' of a district \cite{horn1993practical}; one of the simplest is the ratio $4\pi A_D/P_D^2$, where here $A_D$ and $P_D$ are the area and perimeter of the district, respectively.  Note that with this normalization, the measure takes a value in $(0,1]$.  Other common requirements include respect for geographical phenomena such as cities, counties, etc.

  Our analysis necessarily ignores these geometric constraints on the districts.  (It should be noted that there are no precise and agreed upon definitions of what constitutes a valid district in any particular state, and in practice, many Congressional districts seem to flaunt natural interpretations of these constraints.)

\item \textbf{Mixed populations.}  Theorem \ref{t.seats} concerns a model of redistricting in which districts can be assembled with essentially arbitrary collections of voters among those voters remaining in the unfrozen part of the state.   In practice, however, Democrats and Republicans are sometimes neighbors, and it is generally impossible to draw a district which is $100\%$ loyal to party $A$ or party $B$, as is sometimes an optimum move for our protocol (i.e., as from Lemma \ref{lemma:weaker-strategy}).
\end{itemize}
It may be of theoretical interest to analyze our protocol in richer models motivated by these complications.

  That said, we believe it is reasonable to infer basic real-world properties of our protocol from our rigorous analysis in idealized settings.  Let us first consider Theorems \ref{t.a-seats} and \ref{t.seats} concerning the slate which will result from optimum play in our algorithm.  For these results, it is reasonable to suspect that the idealized model we work in has a significant effect on the precise results we obtain.  In particular, our proof is based on the feasibility of two types of moves for the players: Lemma \ref{lemma:stronger-strategy} requires a player to divide a region into districts with similar proportions of voters loyal to each party, while Lemma \ref{lemma:weaker-strategy} also requires him to draw some districts with pure loyalty for one party.   Obviously, neither of these moves is perfectly possible to emulate in a real-world situation; both players will be handicapped to some degree.

  Nevertheless, we consider the key conclusion of Theorems \ref{t.a-seats} and \ref{t.seats} to be not the particular formula for the slate won by each player in our protocol, but instead the general feature that our protocol does shift the unbalanced seat/loyalty curve from the ``One-player-decides'' protocol to an (asymptotically) symmetric curve (recall Figure~\ref{f.curve}).  In some sense, the key point of Theorems \ref{t.a-seats} and \ref{t.seats} is just that the protocol produces a result within reason, and that neither player gains a significant advantage from the choice of who is assigned the first move in the protocol; we expect that both of these properties would persist in real-world applications.

  For Theorem \ref{t.target} we believe there is actually relatively little lost in our abstraction of the real-world problem.  Much of our analysis (e.g., Lemma \ref{lemma:split-graph}) concerns the case where Player 2 is choosing which district to freeze after Player 1 has divided the state; this part of our analysis holds regardless of what geometric constraints are imposed on the divisions made by Player 1 on his turn.  When Player 2 must divide a region, our analysis directs him to divide the target evenly among many districts.  In practice, divisions with the general feature of dividing a target among many districts are generally quite easy to construct, and thus we expect that our protocol would retain a property similar in spirit to the assertion of Theorem \ref{t.target}.  The main takeaway from Theorem \ref{t.target} is not the precise threshold of $B$ for which the protocol has the $B$-target property, but the fact that neither player will be able to completely direct the composition of any particular district; to some approximation, we expect this property to survive in real-world implementations.

\subsection*{Acknowledgments}  The first two authors are partially supported by the National Science Foundation and the Sloan Foundation; the second author is additionally supported by the Office of Naval Research.  The first author also acknowledges helpful discussions with Maria Chikina on the problem setting considered here.

\bibliographystyle{plain}

\end{document}